\documentclass[a4paper, 11pt]{article}

\usepackage{babel}
\usepackage[utf8]{inputenc}
\usepackage[T1]{fontenc}
\usepackage[margin=1.15in]{geometry}
\usepackage{microtype}

\usepackage{amsmath, amsthm, amssymb}
\usepackage{enumerate}
\usepackage{thmtools}
\usepackage{thm-restate}
\usepackage{mathtools}
\usepackage{hyperref}
\usepackage[ruled,vlined,linesnumbered]{algorithm2e}
\usepackage{xspace}
\usepackage{aligned-overset}
\usepackage{textpos}

\usepackage{tikz}
\usetikzlibrary{positioning}

\declaretheorem[name=Theorem]{theorem}

\declaretheorem[name=Lemma, sibling=theorem]{lemma}

\declaretheorem[name=Observation, sibling=theorem]{observation}

\usepackage{hyperref}
\usepackage[nameinlink]{cleveref}
\hypersetup{
  colorlinks=true,
  linkcolor=blue!80!black,
  citecolor=blue!80!black,
  urlcolor=green!35!black
}

\crefname{claim}{Claim}{Claims}
\crefname{case}{case}{cases}
\crefname{algorithm}{Algorithm}{Algorithms}
\crefname{theorem}{Theorem}{Theorems}
\crefname{figure}{Figure}{Figures}
\crefname{lemma}{Lemma}{Lemmas}
\crefname{observation}{Observation}{Observations}

\let\le\leqslant
\let\ge\geqslant
\let\leq\leqslant
\let\geq\geqslant

\newcommand{\BS}{\textsc{RecSpan}}
\newcommand{\Z}{\textsc{Zorro}}
\newcommand{\bigO}{\mathcal{O}}

\usepackage[]{todonotes}
\graphicspath{{./}}

\title{Almost Linear 3-Spanners of Temporal Cliques\thanks{This research has been initiated/conducted at the AlgUW workshop (B\k{e}dlewo 09.2025), supported by the Excellence
Initiative -- Research University (IDUB) funds of the University of Warsaw. \\ 
\textbf{Funding information:}\\
Davide Bil\`o acknowledges the support of the project \emph{SOS-TG: Spanners and Oracles for Static and Temporal Graphs}, funded by Università degli Studi dell'Aquila under the Call for Proposals for Fundamental Research and Early-Career Research Grants -- Year 2026.\\
Júlia Baligács (during employment in Warsaw), V\'aclav Bla\v zej, Ma\"el Dumas, and Anna Zych-Pawlewicz: ERC project BOBR, funded from the European Research Council (ERC) under the European Union's Horizon 2020 research and innovation programme with grant agreement No. 948057.\\
Júlia Baligács (during employment in Oxford): ERC grant CCOO (grant no.~101165139). Views and opinions expressed are however those of the authors only and do not necessarily reflect those of the European Union or the European Research Council. Neither the European Union nor the granting authority can be held responsible for them.\\
}
}

\date{}
\newcommand{\aff}[1]{\textcolor{black!50}{#1}}

\author{
    Júlia Baligács\\{\small\aff{University of Oxford}}\\
    \href{mailto:jbaligacs@gmail.com}{\small jbaligacs@gmail.com}\bigskip
    \and
    Davide Bil\`o\\{\small\aff{University of L'Aquila}}\\
    \href{mailto:davide.bilo@univaq.it}{\small davide.bilo@univaq.it}
    \and
     V\'aclav Bla\v zej\\{\small\aff{University of Warsaw}}\\
    \href{mailto:vaclavblazej@gmail.com}{\small vaclavblazej@gmail.com}
    \and
    Ma\"el Dumas\\{\small\aff{University of Warsaw}}\\
    \href{mailto:mdumas@mimuw.edu.pl}{\small mdumas@mimuw.edu.pl}
    \and
    Anna Zych{-}Pawlewicz\\{\small\aff{University of Warsaw}}\\
    \href{mailto:anka@mimuw.edu.pl}{\small anka@mimuw.edu.pl}
}

\begin{document}
\begingroup
\renewcommand{\footnotemark}{}
\maketitle
\endgroup
\thispagestyle{empty}
\begin{textblock}{20}(-1.9, 7)
   \includegraphics[width=40px]{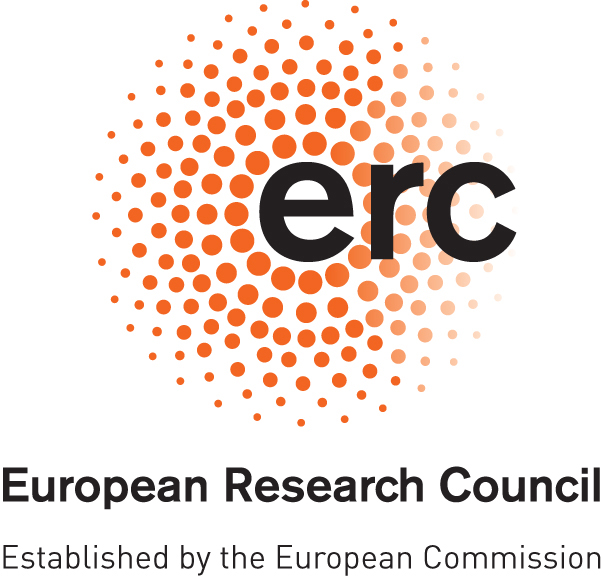}\end{textblock}
\begin{textblock}{20}(-2.15, 7.4)
    \includegraphics[width=60px]{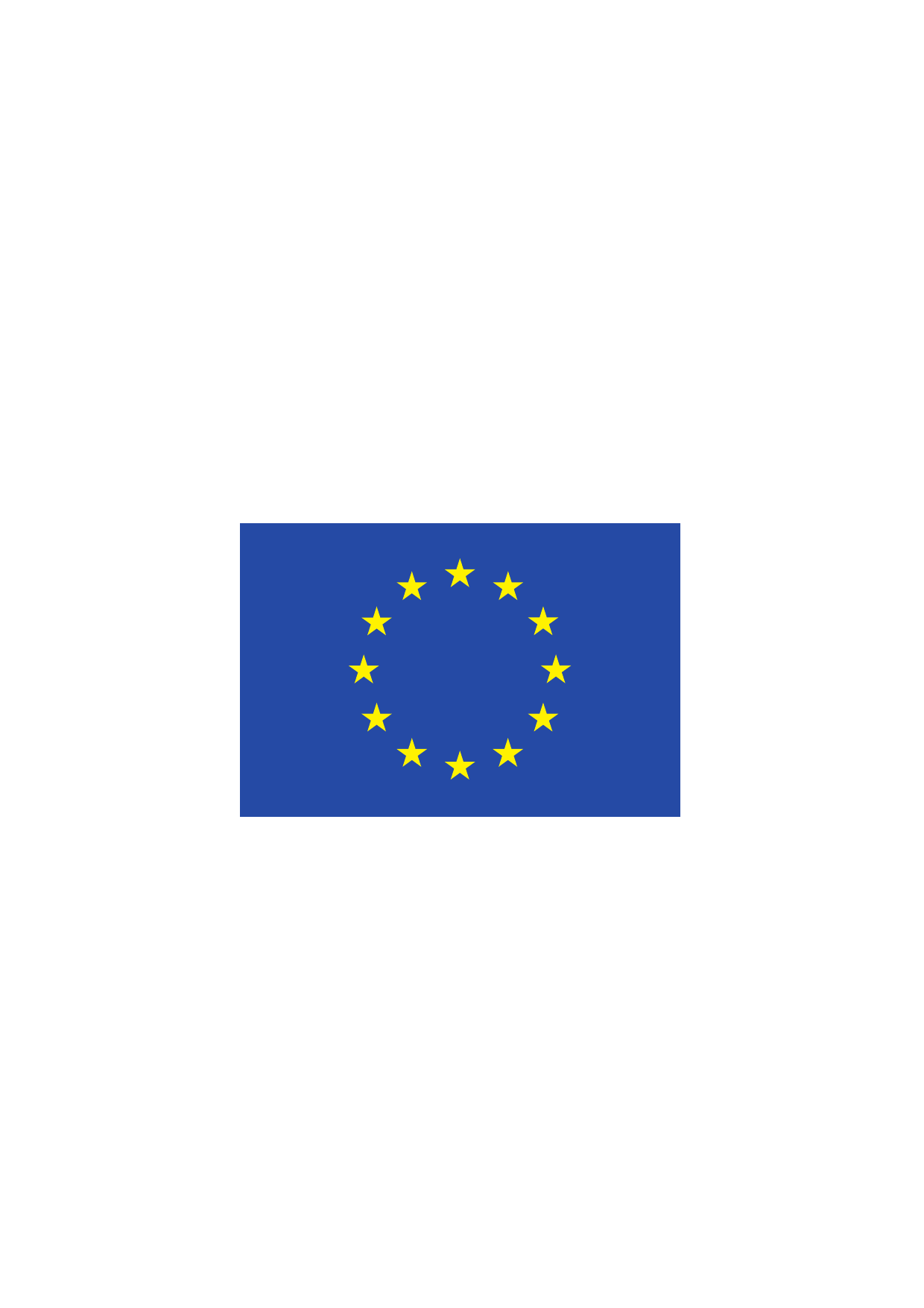}\end{textblock}

\begin{abstract}
Temporal graphs model dynamic networks by assigning positive integer time labels to the edges, while information propagates along \emph{temporal paths}, whose edge labels are traversed in nondecreasing order. A \emph{temporal $\alpha$-spanner} of a temporal graph with $n$ vertices is a temporal subgraph that approximates the minimum-hop temporal distance between every pair of vertices within a factor of $\alpha$. While general temporal graphs may not admit sparse temporal $\alpha$-spanners for any value of $\alpha$, temporal cliques are known to admit temporal $(2k-1)$-spanners of size $\widetilde{\mathcal{O}}(kn^{1+1/k})$ for every positive integer $k$.

We present a simple recursive algorithm that computes, for every temporal clique on $n$ vertices, a temporal $3$-spanner of size $n^{1+2/\sqrt{\ln n}}=n^{1+o(1)}$, thereby improving the previous best upper bound of $\widetilde{\mathcal{O}}(n^{3/2})$. 
We also show that a modified version of our algorithm computes temporal $3$-spanners of size $\mathcal{O}(nL)$ when the lifetime is bounded by~$L$, 
i.e., all time labels are in $\{1,\ldots,L\}$,
thus improving the previous bound of~$\mathcal{O}(2^Ln\log n)$. Both results are particularly striking in light of the known lower bound of~$\Omega(n^2)$ on the size of temporal $2$-spanners, which already holds for temporal cliques of lifetime $L\geq 3$.

Both algorithms rely on a new simple recursive decomposition that certifies temporal connectivity for a large collection of source--target pairs using only $\mathcal{O}(n)$ carefully selected edges and recursively processes only the remaining pairs. Besides yielding substantially improved upper bounds, this approach is significantly simpler than previous constructions.
\end{abstract}

\newpage
\setcounter{page}{1}
\section{Introduction}

\emph{Temporal graphs} provide a natural model for networks whose topology changes over time. In this framework, edges are assigned one or more integer \emph{time labels}, and information propagates along \emph{temporal paths}, i.e., paths whose edge labels are traversed in nondecreasing order.
More formally, a \emph{temporal graph} $G=(V,E,\lambda)$ consists of a set of vertices $V$, a set of edges~$E\subseteq \binom{V}{2}$, and a time labeling function $\lambda \colon E \to \mathcal{P}(\mathbb{N})$.
For two vertices $u,v\in V$, a \emph{temporal $u$-$v$-path} is a $u$-$v$-path in the underlying graph $(V,E)$ such that the sequence of traversed edges~$e_1,\dots, e_k$ admits labels $\lambda_i\in \lambda(e_i)$ ($i \in \{1, \dots, k\}$) with $\lambda_1\leq \ldots \leq \lambda_k$. The \emph{size} of a temporal graph is the total number of labels, that is, $\sum_{e\in E}|\lambda(e)|$.
Temporal graphs arise naturally in many dynamic systems, including communication networks, transportation systems, social interactions, and biological networks~\cite{CFQS12,Mic16}.

A fundamental question in temporal graph algorithms is how to sparsify such networks while preserving temporal connectivity. Given a temporal graph that is temporally connected, i.e., it contains a temporal path between every ordered pair of vertices, a \emph{temporal spanner} is 
a spanning subgraph that preserves temporal connectivity. 
Formally, to specify a temporal spanner of~$G=(V,E,\lambda)$,
it suffices to specify a
set of edges $E'\subseteq E$ and a labeling
function~$\lambda' \colon E' \to \mathcal{P}(\mathbb{N})$
with~$\lambda'(e)\subseteq \lambda(e)$ for every $e\in E'$. The
size
of the corresponding spanner is then the size of $G'=(V,E',\lambda')$.
Unlike static graphs, where sparse connectivity certificates always exist
(namely a spanning tree),
temporal graphs may require dense spanners. Indeed, Kempe, Kleinberg, and Kumar~\cite{KKK02} constructed temporal graphs on~$n$ vertices for which every temporal spanner has size $\Omega(n\log n)$. Axiotis and Fotakis~\cite{AF16} later strengthened this lower bound to $\Omega(n^2)$.
This motivates the study of graph classes where temporal connectivity can still be preserved by sparse certificates.

Temporal cliques are the canonical positive example in this setting, as they admit remarkably sparse temporal spanners despite containing $\Theta(n^2)$ edges, where $n$ is the number of vertices. Casteigts, Peters, and Schoeters~\cite{CPS21} were the first to design an algorithm computing temporal spanners of size $\bigO(n\log n)$ for temporal cliques. Later, Carnevale, Casteigts, and Corsini~\cite{CCC25} gave a significantly simpler recursive construction achieving the same asymptotic bound. More recently, Baligács~\cite{Bal26} showed that every temporal clique admits a temporal spanner of size at most $7n$. Finally, Angrick~\emph{et~al.}~\cite{ABF+24} proved that temporal cliques and temporal bicliques admit temporal spanners of asymptotically the same size. Consequently, sparse temporal spanners are now essentially understood for temporal cliques and bicliques. Beyond worst-case graph classes, Casteigts, Raskin, Renken, and Zamaraev~\cite{CRRZ24} studied random temporal graphs generated from Erd\H{o}s--Rényi random graphs and showed that they admit nearly optimal temporal spanners asymptotically almost surely.

A natural next question is whether similarly sparse certificates can preserve not only temporal connectivity, but also the minimum-hop temporal distances. In many applications, the number of intermediate relays is a critical parameter: every additional hop increases communication overhead and may introduce additional sources of failure. This motivates the study of \emph{temporal $\alpha$-spanners}, which approximate the minimum-hop temporal distance between every pair of vertices within a multiplicative factor of $\alpha$. 
For temporal cliques, where every pair of vertices is directly connected by a temporal edge, a temporal $\alpha$-spanner guarantees 
a temporal path with at most $\alpha$ edges between every pair of vertices.

Bilò, D'Angelo, Gualà, Leucci, and Rossi~\cite{BDGLR22} initiated the study of temporal $\alpha$-spanners, and showed that temporal cliques admit temporal $(2k-1)$-spanners of size $\bigO(kn^{1+1/k}\log^{1-1/k}n)$ for every integer $k\ge 1$. In particular, they obtained temporal $\lfloor\log n\rfloor$-spanners of size $\bigO(n\log^2 n)$. They also studied temporal cliques with \emph{lifetime} $L$, i.e., instances in which all time labels belong to $\{1,\ldots,L\}$, and obtained temporal $3$-spanners of size~$\bigO(2^Ln\log n)$. They further proved that temporal $2$-spanners have size $\bigO(n\log n)$ for $L=2$, but require $\Omega(n^2)$ edges for $L\geq 3$. Therefore, temporal $3$-spanners are the first non-trivial constant-hop regime for $L\geq 3$ admitting sparse spanners.

Despite these results, no superlinear lower bound is known on the size of temporal $\alpha$-spanners for temporal cliques even for $\alpha=3$. Understanding the tradeoff between sparsity and hop bounds is therefore a fundamental open problem, which is the focus of this work.

\paragraph{Our contribution.}

We show that every temporal clique admits a temporal $3$-spanner of almost linear size.

\begin{theorem}
\label{thm:3-spanners}
Every temporal clique on $n$ vertices admits a temporal $3$-spanner of size at most $n^{1+2/\sqrt{\ln n}}=n^{1+o(1)}$. Moreover, such a spanner can be computed in time $\bigO(n^{2+o(1)})$.
\end{theorem}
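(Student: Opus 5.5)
The plan is to prove a more general statement about \emph{partial} requirements and derive \cref{thm:3-spanners} by optimizing a parameter. We may assume every edge carries a single label, by keeping an arbitrary one per edge, since only direct edges and short temporal paths matter. The generalized object is an instance $(G,\mathcal{P})$, where $G$ is a temporal clique and $\mathcal{P}$ is a set of ordered source--target pairs. Such an instance is solved by a labeled subgraph containing, for every $(s,t)\in\mathcal{P}$, a temporal $s$-$t$-path with at most $3$ edges. The base instance is $\mathcal{P}=V\times V$.

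The basic certifying step is a star. Fix a hub $v$ and order the other vertices $u_1,\dots,u_{n-1}$ by the label of $u_iv$. Keeping the $n-1$ star edges at $v$ certifies every pair $(u_i,u_j)$ with $i<j$ through the $2$-hop path $u_i v u_j$. A blocked version of this is the ``simple recursive decomposition'' I would aim for. Split the order into $k$ consecutive blocks $B_1,\dots,B_k$ of size about $n/k$. Keep the star plus $\bigO(n)$ extra edges, chosen for each vertex as its earliest and latest edge into each relevant block. These should certify all pairs going from an earlier block to a later one, and a large share of the backward pairs, using the third hop.

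The pairs that remain should be those inside a single block, together with backward pairs whose sources and targets are confined to a bounded number of block pairs. These I would recurse on, on vertex sets of size about $n/k$. Across all subproblems I want the total number of vertices to grow by at most a constant factor $c$ (ideally $2$) per level. The structural lemma to prove is: an instance on $n$ vertices can be handled with $\bigO(nk)$ edges plus recursive instances on at most $n/k$ vertices each, whose vertex counts sum to at most $2n$.

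Given such a lemma, the size bound is a routine calculation. Let $d=\log_k n$ be the recursion depth. Level $i$ has total vertex mass at most $2^i n$ and costs $\bigO(2^i nk)$ edges, so the total is $\bigO(nk\,2^{d})$. Choosing $k=e^{\sqrt{\ln n}}$ gives $d=\sqrt{\ln n}$ and
\[
nk\,2^{d}=n\cdot n^{1/\sqrt{\ln n}}\cdot 2^{\sqrt{\ln n}}\le n^{1+2/\sqrt{\ln n}},
\]
where the hidden constant must be absorbed by a slightly more careful accounting. Each level of the recursion only sorts edge labels and scans adjacencies. Since the total vertex mass is $n^{1+o(1)}$ and each subproblem on $m$ vertices costs $\bigO(m^2\log m)$, the running time is $\bigO(n^{2+o(1)})$.

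The main obstacle is the structural lemma itself. We must choose the $\bigO(nk)$ certifying edges so that every uncertified pair has both endpoints inside a common small recursive instance, and so that the recursive instances are again instances of the same type (temporal cliques with a requirement set) without the vertex mass blowing up by more than a constant factor per level. My first attempt is to define the subproblems by pairs of blocks $(B_a,B_b)$ with $a\ge b$, restricting to vertices that are not yet covered via a ``latest edge into $B_b$'' witness. I would then argue by a charging argument that each vertex survives in at most two subproblems. If that fails, I would allow a factor-$c$ blow-up with $c$ constant and retune $k$; this still yields $n^{1+O(1/\sqrt{\ln n})}$, but may lose the exact constant $2$ in the exponent.
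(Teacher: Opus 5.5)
There is a genuine gap. The whole proof rests on your structural lemma, and neither the lemma nor a workable mechanism for it is given. A star at an arbitrary hub $v$ certifies only the pairs that are forward in $v$'s order, which is about half of all pairs. It does nothing for the backward ones. For a backward pair $(x,y)$, a $3$-hop path through the star has the form $(x,w,v,y)$ with $\lambda(\{x,w\})\le\lambda(\{w,v\})$, or $(x,v,w,y)$ with $\lambda(\{v,w\})\le\lambda(\{w,y\})$. Keeping each vertex's earliest and latest edge into each block does not guarantee either inequality for any kept edge. You also give no argument that the failures are confined to a bounded number of block pairs, nor for the charging claim that each vertex survives in at most two subproblems.

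Your instance class is also too loose. A lemma of this kind needs a class that is closed under the recursion and on which one certifying step provably works. The paper uses rectangular requirement sets $S'\times T'$, i.e., balanced temporal bicliques. These are obtained from the clique by taking two copies of $V$ and giving the edges $\{u_S,u_T\}$ label $1$.

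The missing idea is the \textsc{Zorro} step, which chooses the two hubs by pigeonhole rather than arbitrarily:
\begin{enumerate}
\item In a balanced biclique of size $n$, the $kn$ incidences ``$t$ is among the $k$ earliest neighbors of $s$'' yield a target $t^*$ that is early for a set $S_1$ of $k$ sources.
\item Let $s^*\in S_1$ be the source with the latest edge to $t^*$, and let $T_2$ be the $n-k$ latest neighbors of $s^*$.
\item Then $(s,t^*,s^*,t)$ is temporal for all $s\in S_1$ and $t\in T_2$, so two stars with at most $n$ edges in total cover $S_1\times T_2$.
\end{enumerate}
Running this with $k$ and with $n-k$ leaves exactly three uncovered rectangles, of sizes $k$, $k$ and $n-k$. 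This gives $K(n)\le 2n+2K(f(n))+K(n-f(n))$. With $f(n)=\lceil ne^{-\sqrt{\ln n}}\rceil$, which matches your choice of $k$, a direct induction gives $K(n)\le ne^{2\sqrt{\ln n}}$ with constant exactly $1$.

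Your lemma does in fact hold in biclique form, but only as a consequence of this step. Iterating it with a fixed ratio $1/k$ roughly $k\ln k$ times along the large branch costs $\bigO(nk)$ edges. It leaves instances of size at most about $n/k$ whose sizes sum to at most about $2n$.

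Your level-by-level count also only gives $\bigO(n^{1+(1+\ln 2)/\sqrt{\ln n}})$. That implies the stated bound without the $\bigO$ only for sufficiently large $n$. Inducting directly on the recurrence, as the paper does, avoids this.
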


This is the first construction of an almost-linear-size temporal spanner with a constant hop bound, improving the previous best upper bound of $\widetilde{\bigO}(n\sqrt n)$ for 3-spanners. 
We also consider temporal cliques with bounded lifetime $L$ and show that they admit sparse temporal~$3$-spanners.

\begin{theorem}
\label{thm:lifetime}
Every temporal clique on $n$ vertices of lifetime $L$ admits a temporal 3-spanner of size~$\bigO(nL)$.
Moreover, such a spanner can be computed in time $\bigO(n^3L)$.
\end{theorem}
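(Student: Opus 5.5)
The plan is to build the spanner by a recursion that repeatedly picks a hub star, certifies most source--target pairs with paths through the hub, and recurses only on the remaining pairs. In every recursive call the range of relevant time labels should strictly shrink, so the lifetime $L$ bounds the recursion. Throughout, I assume each edge of the clique carries a single label in $\{1,\dots,L\}$; if edges carry several labels, I would first keep one label per edge, which only makes the task harder.

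\textbf{Base step.} I pick an arbitrary hub $v$ and keep its whole star ($n-1$ labels). I partition $V\setminus\{v\}$ into classes $S_1,\dots,S_L$, where $S_t$ consists of the vertices $u$ with $\lambda(uv)=t$. For $u\in S_i$ and $w\in S_j$ with $i\le j$, the path $u,v,w$ is temporal with two hops, so only the ordered pairs with $i>j$ remain.

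\textbf{Two cheap certification rules.} For such a pair I would use two rules, each keeping at most one edge per (vertex, class) combination, i.e.\ $\bigO(L)$ edges per vertex.
\begin{enumerate}[(a)]
\item \emph{Forward rule.} If a source $u\in S_i$ has some $x\in S_j$ with $\lambda(ux)\le j$, I keep $ux$. Then $u,x,v,w$ is a temporal path of length $3$ to every $w\in S_j$, since $\lambda(ux)\le j=\lambda(xv)=j\le\lambda(vw)=j$.
\item \emph{Backward rule.} If a target $w\in S_j$ has some $y\in S_i$ with $\lambda(yw)\ge i$, I keep $yw$. Then $u,v,y,w$ is temporal for every $u\in S_i$.
\end{enumerate}

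\textbf{Recursion on the bad pairs.} A pair $(u,w)$ is left uncertified only if all edges from $u$ into $S_j$ have label $>j$ and all edges from $w$ into $S_i$ have label $<i$. In particular, every relevant edge between the sets $U_{ij}$ of such bad sources in $S_i$ and $W_{ij}$ of such bad targets in $S_j$ has its label in the open window $(j,i)$. I would then recurse on the bipartite temporal clique between $U_{ij}$ and $W_{ij}$, restricted to labels in $(j,i)$. This is a strictly smaller lifetime window of length $i-j-1$. The recursive step is the same as above, using a hub on one side to connect sources to targets; the 3-hop budget is fresh because the recursive paths live entirely inside the subinstance.

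\textbf{Main obstacle: the size accounting.} Naively, a vertex pays $\bigO(L)$ at each of up to $L$ levels, which gives only $\bigO(nL^2)$. To reach $\bigO(nL)$, I would make each vertex pay only for classes inside its current window. I would then argue, with a charging or potential argument, that the windows a fixed vertex belongs to across the recursion are nested and strictly shrinking. The goal is for the charges to telescope to $\bigO(L)$ per vertex. For this I would try charging the edge kept for the pair (vertex $u$, class $j$) to the labels between $j$ and $i$ that are discarded from $u$'s future windows, so that each label is charged $\bigO(1)$ times per vertex.

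If this charging fails, my fallback is to restructure the rules. I would pick the hub so that the classes are handled by a single threshold per vertex (the smallest $j$ for which the forward rule applies), which keeps the per-level cost $\bigO(1)$ per vertex, and the depth at most $L$.

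\textbf{Running time.} Each recursion level can be carried out in $\bigO(n^2)$ time by scanning the adjacencies. There are $\bigO(nL)$ subproblem invocations in total across the recursion, which gives the claimed $\bigO(n^3L)$ bound.
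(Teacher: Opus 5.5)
Your base step and both certification rules are correct (in fact a forward edge for class $j$ serves every target class $j'\ge j$, and a backward edge for class $i$ serves every source class $i'\le i$). The observation that every uncertified pair in $S_i\times S_j$ has its edge label in the open window $(j,i)$ is also correct.

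\textbf{Main gap: the size bound.} This bound is the whole content of the theorem, and you leave it as something you ``would try''. Its premise is already false, because a vertex does not occupy one subinstance per level.
\begin{itemize}
\item A source $u\in S_i$ lies in $U_{ij}$ for \emph{every} class $j<i$ for which the forward rule fails. These are different subinstances, with disjoint target sets $W_{ij}\subseteq S_j$.
\item This does happen. Set $\lambda(\{u,w\})=j+1$ for all $u\in S_i$, $w\in S_j$ with $i\ge j+2$. Then $U_{ij}=S_i$ and $W_{ij}=S_j$ for all such pairs $(i,j)$. So every vertex sits in $\Theta(L)$ sibling subinstances already at depth one, each with its own hub and its own rule edges.
\item Inside each subinstance, a vertex can again be split among as many subinstances as the window has labels.
\end{itemize}
Nothing in your argument bounds the number of (vertex, subinstance) incidences. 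So your ``naive'' $\mathcal{O}(nL^2)$ does not follow, nor does $\mathcal{O}(nL)$, nor the $\mathcal{O}(nL)$ invocations you use for the running time.

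The proposed charging cannot fix this. The sibling windows $(1,i)\supset(2,i)\supset\cdots\supset(i-2,i)$ containing $u$ all contain label $i-1$, so that label would be charged $\Theta(i)$ times for $u$. The fallback (a single threshold per vertex) only reduces the cost \emph{per incidence} to $\mathcal{O}(1)$. The vertex $u$ remains bad for every class below its threshold and still enters one subinstance per such class.

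\textbf{Second problem: the recursive step is not ``the same as above''.} Your subinstance is the bipartite graph between $U_{ij}$ and $W_{ij}$, where every source--target path has length $1$ or $3$. So the hub step that certified all pairs with $i\le j$ by $2$-hop paths is unavailable there. A single hub on one side only yields rules in one direction; you need a genuinely two-hub $3$-hop gadget. If you instead allow clique edges inside $U_{ij}$ or $W_{ij}$, their labels are not confined to $(j,i)$, and the window need not shrink.

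\textbf{What the paper does instead.} The paper does not recurse on the label range at all. It recurses on the vertex set, in a single branch.
\begin{itemize}
\item It works in the biclique, via the clique-to-biclique observation.
\item A Cauchy--Schwarz count over the label multiplicities gives $s,t$ with $|S_{st}|+|T_{st}|\ge n+n/L$, where the two stars at $s$ and $t$ cover $S_{st}\times T_{st}$.
\item The balancing lemmas (choose $s^*$ maximizing $\sum_{t}\lambda(\{s^*,t\})$) then peel off $r\ge n/L$ sources and $r$ targets \emph{all} of whose pairs are covered, at cost $4n+8Lr\le 12Lr$.
\item The only recursive call is on one balanced biclique of size $n-r$. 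So every removed vertex is paid for once, and the costs telescope to $12Ln$.
\end{itemize}
This ``remove whole vertices, one subproblem'' structure is what your proposal lacks.
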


This improves the previous bound of $\bigO(2^Ln\log n)$ by removing the logarithmic factor and replacing the exponential dependence on the lifetime with a linear one. It is worth noticing that all our results extend to temporal bicliques as well.

While previous constructions relied on substantially more involved recursive de\-com\-po\-si\-tions, our algorithms are based on a simple recursive procedure, called ~$\Z$, that exploits the equivalence between temporal cliques and temporal bicliques established in~\cite{ABF+24}. Our procedure repeatedly identifies a carefully selected set of $\bigO(n)$ edges that simultaneously certifies $3$-hop temporal connectivity for a large collection of source--target pairs. The key idea is that these pairs do not need to be handled individually: after removing all pairs already certified by our procedure, the algorithm recursively processes only the remaining pairs. This leads to a simple recursive decomposition that captures the structure of temporal cliques and achieves near-linear sparsity.

Our results reveal a sharp transition in the hop-constrained setting: while temporal $2$-spanners may require quadratic size, allowing one additional hop already enables near-linear sparsification. An intriguing open question is whether every temporal clique admits a linear-size temporal $3$-spanner.

 \section{Preliminaries}

We begin by collecting the notation, conventions, and results used throughout the paper.

First, observe that it suffices to prove \cref{thm:3-spanners,thm:lifetime} for temporal cliques in which every edge carries exactly one time label: Indeed, given a temporal clique $G$ in which an edge may carry multiple time labels, retain an arbitrary single label on each edge. Any temporal $\alpha$-spanner of the resulting temporal clique is also a temporal $\alpha$-spanner of the original.
Throughout the remainder of the paper, we may therefore assume that
every edge carries exactly one time label, i.e., $\lambda \colon E(G) \to \mathbb{N}$
is a function assigning a natural number to every edge in $G$.
Moreover, for our purposes, a temporal spanner is simply a set of edges (as no selection of the time labels has to be made).

Under this convention, each vertex~$v$ of a temporal graph naturally
induces an ordering of its neighbors by nondecreasing values
of~$\lambda(\{v,\cdot \})$, with ties broken arbitrarily.
Throughout the paper, we refer to the $i$-th neighbor in this ordering as the $i$-th smallest neighbor of~$v$.

It is often more convenient to work with bicliques rather than cliques.
A \emph{temporal biclique} $G=(S,T,\lambda)$ consists of disjoint sets of \emph{sources}~$S$ and \emph{targets}~$T$, together with a time-labeling function~$\lambda$.
Its vertex set is $S\cup T$, while its edge set is omitted from the notation, as it is always $E(G):=\{\{s,t\}: s\in S, t\in T\}$.
We say that $G$ is  \emph{balanced of size~$n$} if $|S|=|T|=n$.
Moreover, for $S'\subseteq S$ and $T'\subseteq T$, we denote by
$G[S',T']$ the \emph{induced temporal subgraph}
$(S',T',\lambda |_{\{\{s,t\}: s\in S',t\in T'\}})$.
A temporal \emph{$\alpha$-bispanner} of a temporal biclique~$G$ is a set $E'\subseteq E(G)$
such that, for every $s\in S$ and $t\in T$, there exists a temporal path from~$s$ to~$t$ with at most~$\alpha$ hops using only edges in~$E'$; paths from targets to sources are not required.

Importantly, it was observed in~\cite{ABF+24} that the questions of finding sparse spanners for temporal cliques and temporal bicliques are essentially equivalent.
This allows us to work with bicliques rather than cliques, and we prove bipartite analogues of \cref{thm:3-spanners,thm:lifetime}.
Using the construction from~\cite{ABF+24}, it is then straightforward to derive our results for temporal cliques.
For completeness, we provide the derivation below.

\begin{observation}
\label{obs:biclique}
Assume every balanced temporal biclique of size $n$ and lifetime $L=L(n)$ admits an $\alpha$-bispanner of size $f(n, L)$. Then every temporal clique on $n$ vertices with lifetime~$L$ admits an $\alpha$-spanner of size $f(n, L)$.
\end{observation}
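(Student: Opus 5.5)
Given a temporal clique $G=(V,E,\lambda)$ on $n$ vertices with lifetime $L$, I will build an auxiliary balanced temporal biclique by doubling the vertex set. Let $S=\{v_s : v\in V\}$ and $T=\{v_t : v\in V\}$ be two disjoint copies of $V$. Set $\lambda'(\{u_s,v_t\}):=\lambda(\{u,v\})$ for $u\neq v$. For the ``diagonal'' pairs $\{v_s,v_t\}$, which have no counterpart in $G$, I will choose a label that makes these edges useless for any nontrivial temporal path. The natural choice is $\lambda'(\{v_s,v_t\}):=1$, the smallest label, which keeps the lifetime at most $L$. Then $H=(S,T,\lambda')$ is a balanced temporal biclique of size $n$ and lifetime $L$, so by assumption it has an $\alpha$-bispanner $E'_H$ of size at most $f(n,L)$.

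Next I project $E'_H$ back to $G$. Let $E':=\{\{u,v\}: \{u_s,v_t\}\in E'_H,\ u\neq v\}$. Diagonal edges are discarded and duplicates merge, so $|E'|\le |E'_H|\le f(n,L)$. The key step is to show that $E'$ is an $\alpha$-spanner. Take distinct $u,w\in V$ and the temporal path $P$ in $E'_H$ from $u_s$ to $w_t$ with at most $\alpha$ hops. Its vertices alternate between $S$ and $T$. Mapping each copy $x_s$ or $x_t$ to $x$ turns every non-diagonal edge into an edge of $E'$ with the same label. This yields a temporal walk from $u$ to $w$ in $G$ with nondecreasing labels and at most $\alpha$ edges, once the diagonal edges are handled. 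Shortcutting repeated vertices in such a walk keeps the labels nondecreasing and only decreases the hop count, so it gives a temporal path.

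The main obstacle is the diagonal edges inside $P$. An edge $\{x_s,x_t\}$ projects to a loop at $x$, and deleting it simply removes one hop; the label monotonicity of the remaining edges is preserved because removing an element from a nondecreasing sequence keeps it nondecreasing. So diagonal edges are in fact harmless under any labeling, and the label choice above matters only for the lifetime. One degenerate case remains: if the path reduces to a single diagonal edge, then $u=w$, which is excluded.

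For the running-time claims used later, the reduction is clearly computable in $\bigO(n^2)$ time, so the clique bounds inherit the biclique running times.
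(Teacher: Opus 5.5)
Your proposal is correct and matches the paper's proof essentially step for step: the same doubled biclique with diagonal label $1$, the same projection that discards diagonal edges (giving $|E'|\le|E'_H|$), and the same compression of diagonal edges along the path. Your extra remarks are small refinements of details the paper leaves implicit: the diagonal label matters only for the lifetime, and repeated vertices can be shortcut while keeping labels monotone, which is relevant only for $\alpha>3$.
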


\begin{proof}
Let $G=(V,E,\lambda)$ be a temporal clique with $|V|=n$ and time labels in $\{1, \dots, L\}$.
We construct a temporal biclique $B=(S,T,\lambda')$ as follows (cf.~\cref{fig:bicliques}): We let $S,T$ be copies of $V$, denoted $S:=\{u_S: u\in V\}$, $T:=\{u_T: u\in V\}$, and set
\[
	\lambda'(\{u_S,v_T\}) = \begin{cases*}
		\lambda(\{u,v\}) , & if $u \neq v$,\\
		1,            & if $u=v$.
	\end{cases*}
\]
Let $E_B'$ be a temporal $\alpha$-bispanner of $B$.
Set $E_G'\subseteq E(G)$ to be the set~$E_G':=\{\{u,v\} : u\neq v, \{u_S,v_T\}\in E_B'\}$.
Then $|E_G'|\leq |E_B'|$. Moreover, for $u_S \in S, v_T \in T$ with $u \neq v$, there is a temporal $u_S$-$v_T$-path $P_B$ of at most $\alpha$ hops
in $B$ using the edges of $E'_B$. This path $P_B$ corresponds to
a temporal $u$-$v$-path $P_G$ in $G$ using the edges $E_G'$,
where every edge $\{x_S,y_T\}$ of $P_B$ such that $x=y$ is compressed
to a single vertex $x=y$.
Observe that the number of edges of $P_G$ is upper bounded by the number of edges of~$P_B$, as the consecutive copies of the same vertex of $G$ are merged.
\end{proof}

\begin{figure}
\centering
\begin{tikzpicture}[
    vertex/.style={
        circle,
        draw,
        fill=white,
        minimum size=4mm,
        inner sep=0pt,
        font=\small
    },
    edgeLabel/.style={
        circle,
        fill=white,
        inner sep=1.2pt,
        font=\scriptsize,
        sloped
    }, scale=0.7
]
\definecolor{myred}{RGB}{220,50,47}
\definecolor{myblue}{RGB}{38,90,220}
\definecolor{mygreen}{RGB}{0,140,40}
\definecolor{myviolet}{RGB}{140,0,180}

\node[vertex] (a) at (0,2.4) {$a$};
\node[vertex] (b) at (-1.8,0) {$b$};
\node[vertex] (c) at (1.8,0) {$c$};

\draw[very thick,myred]
    (a)--node[edgeLabel,text=myred,midway]{1}(b);

\draw[very thick,myblue]
    (a)--node[edgeLabel,text=myblue,midway]{2}(c);

\draw[very thick,mygreen]
    (b)--node[edgeLabel,text=mygreen,midway]{3}(c);

\node[vertex] (aL) at (7,4) {$a_S$};
\node[vertex] (bL) at (7,2) {$b_S$};
\node[vertex] (cL) at (7,0) {$c_S$};

\node[vertex] (aR) at (12,4) {$a_T$};
\node[vertex] (bR) at (12,2) {$b_T$};
\node[vertex] (cR) at (12,0) {$c_T$};

\draw[very thick,myviolet]
    (aL)--node[edgeLabel,text=myviolet,midway]{1}(aR);

\draw[very thick,myviolet]
    (bL)--node[edgeLabel,text=myviolet,pos=.80]{1}(bR);

\draw[very thick,myviolet]
    (cL)--node[edgeLabel,text=myviolet,midway]{1}(cR);

\draw[very thick,myred]
    (aL)--node[edgeLabel,text=myred,pos=.20]{1}(bR);

\draw[very thick,myred]
    (bL)--node[edgeLabel,text=myred,pos=.20]{1}(aR);

\draw[very thick,myblue]
    (aL)--node[edgeLabel,text=myblue,pos=.80]{2}(cR);

\draw[very thick,myblue]
    (cL)--node[edgeLabel,text=myblue,pos=.80]{2}(aR);

\draw[very thick,mygreen]
    (bL)--node[edgeLabel,text=mygreen,pos=.20]{3}(cR);

\draw[very thick,mygreen]
    (cL)--node[edgeLabel,text=mygreen,pos=.20]{3}(bR);

\end{tikzpicture}

\caption{A temporal clique $G$ on the left and the corresponding
biclique $B$ on the right. The temporal path $P_B= (a_S, a_T, b_S, c_T)$
 in $B$ corresponds to the path $P_G= (a,b,c)$ in $G$. }
\label{fig:bicliques}
\end{figure}
 \section{Almost linear 3-bispanners of temporal bicliques} \label{sec:3bispan}

In this section, we present an algorithm which constructs a temporal $3$-bispanner of a temporal biclique. More precisely, we prove the following result.
\begin{theorem}
\label{thm:3-bispanners}
Every balanced temporal biclique of size $n$ admits a temporal 3-bispanner of size at most~$n^{1+2/\sqrt{\ln n}}=n^{1+o(1)}$. Moreover, such a bispanner can be computed in time~$\bigO(n^{2+o(1)})$.
\end{theorem}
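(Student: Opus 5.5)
The plan is to design a recursive procedure (the $\Z$ procedure announced in the introduction) on temporal bicliques $G[S,T]$ and analyse it through a recurrence. The recurrence I aim for is
\[
f(n)\;\le\; d\cdot f(n/d) + c\,d\,n ,
\]
where $f(n)$ bounds the bispanner size for bicliques with $|S|+|T|\le 2n$, and $d$ is a branching parameter fixed once at the top level. This unfolds to $f(n)\le c\,d\,n\log_d n$. Choosing $d=e^{\sqrt{\ln n}}$ gives $\log_d n=\sqrt{\ln n}$, so
\[
f(n)\le c\,n\,e^{\sqrt{\ln n}}\sqrt{\ln n}\le n\cdot n^{2/\sqrt{\ln n}} .
\]
The last inequality holds for $n$ large, since $e^{\sqrt{\ln n}}=n^{1/\sqrt{\ln n}}$ and the remaining factor $c\sqrt{\ln n}$ is absorbed by the second $n^{1/\sqrt{\ln n}}$. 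Small $n$ are handled by taking all $n^2$ edges. The running time follows from the same recursion, since each level does $\bigO(n^2)$ work (sorting neighbourhoods and scanning pairs). The multiplicative overhead is $n^{o(1)}$, which gives $\bigO(n^{2+o(1)})$.

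The core is a decomposition lemma. Given $G[S,T]$, I want to select $\bigO(d(|S|+|T|))$ edges. These should provide a $\le 3$-hop temporal path for every pair $(s,t)$ except the pairs inside a family of sub-bicliques $G[S_i,T_i]$, $i=1,\dots,d$. The $S_i$ should be pairwise disjoint, as should the $T_i$, and each should satisfy $|S_i|,|T_i|\le n/d$. Pairs in different parts never need to be revisited. Any path found inside $G[S_i,T_i]$ is also a path in $G$, because sub-bicliques are induced. A bispanner of $G$ is then the union of the selected edges and the recursive bispanners of the parts.

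For the selection I would use the neighbour orderings. Choose a pivot target $t^*$ and sort $S$ by $\lambda(\{s,t^*\})$. Cut this order into $d$ consecutive quantile blocks $S_1,\dots,S_d$ of size $n/d$, with $\tau_i$ the largest pivot time in block $S_i$. For each boundary time $\tau_i$, every source $s$ keeps its earliest edge to a suitable "hub" target, and every target $t$ keeps its latest edge back to a hub source.

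The certified pairs are those that admit a path $s\to t'\to s'\to t$ whose middle edge is one of $\bigO(d)$ stored hub edges per vertex. The intended certification rule is the following. If $\lambda(\{s,t\})$-type quantities place $s$ and $t$ in different time windows relative to the $\tau_i$, then a hub path through the window boundary is temporal. The uncertified pairs are exactly those where both endpoints fall in the same window $i$, and these form the $G[S_i,T_i]$.

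The main obstacle is this decomposition lemma: proving that with only $\bigO(d)$ edges per vertex the uncertified pairs really collapse into $d$ vertex-disjoint sub-bicliques, each a factor $d$ smaller on both sides. My first attempt is the windowing by quantiles of a single pivot described above. If windows on the target side do not shrink automatically, I would fall back to a two-step version. First shrink the source side by a factor $d$ via source windows, then apply the symmetric argument with a pivot source to shrink the target side. This costs a factor $2$ in recursion depth, which is harmless because the exponent in the statement has slack $2/\sqrt{\ln n}$ against the $1/\sqrt{\ln n}$ the recurrence needs.
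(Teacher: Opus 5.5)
There is a genuine gap. Your whole argument rests on the decomposition lemma, which you leave unproven, and that lemma is far stronger than the theorem. Suppose one could always pay $\bigO(d\,n)$ edges and leave only $d$ vertex-disjoint residual sub-bicliques of size $n/d$. Then taking $d=2$ in your own recurrence gives $f(n)\le 2f(n/2)+2cn$, i.e.\ $3$-bispanners of size $\bigO(n\log n)$. That is much better than $n^{1+2/\sqrt{\ln n}}$ and better than anything the paper proves; the paper presents $n^{1+o(1)}$ as the new best and leaves the linear question open. So this lemma is not a routine step but at least as hard as the statement itself. That you need the suboptimal $d=e^{\sqrt{\ln n}}$ to land exactly on the target bound is a symptom of this.

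The sketched windowing also fails as described. A hub path $(s,t^*,s_i,t)$ with $s$ in blocks $1,\dots,i$ needs $\lambda(\{s_i,t\})\ge\tau_i=\lambda(\{s_i,t^*\})$. If $t^*$ is a late neighbour of the sources, the set of such $t$ can be essentially $\{t^*\}$. Then cross-window pairs are not certified and the residual is not block-diagonal. The two-step fallback is not specified enough to evaluate.

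The missing idea is to settle for one certified rectangle per call and accept that the residual instances overlap. The paper's $\Z$ works as follows:
\begin{itemize}
\item it picks $t^*$ by pigeonhole among the $k$ earliest neighbours of at least $k$ sources, and lets $S_1$ be $k$ such sources;
\item it takes $s^*\in S_1$ to be the latest neighbour of $t^*$ within $S_1$;
\item it lets $T_2$ be the $n-k$ latest neighbours of $s^*$.
\end{itemize}
Then the double star $Z$, with $|Z|\le n$, covers $S_1\times T_2$ via the paths $(s,t^*,s^*,t)$. Only $|S_1|+|T_2|=n$ is guaranteed.

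Running $\Z$ with $k$ and with $n-k$ therefore leaves three overlapping residual bicliques: $G[S_1,T\setminus T_2]$, $G[S\setminus S_1',T_2']$ and $G[S\setminus S_1,T\setminus T_2']$, of sizes $k,k,n-k$. This gives $K(n)\le 2n+2K(k)+K(n-k)$. Since the residual sizes sum to $n+k>n$, this recurrence cannot give $n\log n$. The choice $k=\lceil ne^{-\sqrt{\ln n}}\rceil$ balances this blow-up against the $2n$ overhead and yields $ne^{2\sqrt{\ln n}}$ through a careful induction with explicit constants.

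The constants matter because the theorem has none. Your combination of ``for $n$ large'' with the trivial $n^2$ bound, which only works for $n\le 54$, leaves an intermediate range of $n$ uncovered unless $c$ is tracked.
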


By \cref{obs:biclique}, this in turn implies our main result, \cref{thm:3-spanners}.
The first ingredient is the algorithm~$\Z$, which, given a
temporal biclique $G=(S,T,\lambda)$,
computes a subset~$Z$ of edges in $G$
covering a large portion of pairs in $S \times T$, 
that is, for $S'\subseteq S, T'\subseteq T$,
we say that $Z$ \emph{covers} $S'\times T'$ if,
for every $s\in S'$ and $t\in T'$, there exists a temporal
$s$-$t$-path of length at most $3$ in $G$ only using edges in $Z$.
The size of the set of covered pairs depends
on the parameter $k$ given as an input to the $\Z$ algorithm.
To be more precise, the $\Z$ algorithm receives a
balanced biclique $G$ of size $n$
and a parameter $k$, and returns a triple~$(S_1,T_2,Z)$,
where $Z$ is a set of edges covering $S_1\times T_2$.
In addition to that, the algorithm guarantees that $|S_1|=k$ and $|T_2|=n-k$.
Importantly, $Z$ forms a tree of diameter~$3$, thus~$|Z| \leq n$.

The $\Z$ algorithm is fairly simple (cf.~\cref{alg:zorro}).
Intuitively, the algorithm aims to find $s^*\in S$, and $t^*\in T$, such that for every $s\in S_1$ and $t\in T_2$, the path $(s,t^*,s^*,t)$ is temporal and then sets $Z$ to be the edges in these paths, which is the union of two stars.
For this, the algorithm first chooses $t^*$ such that it belongs to the $k$ smallest neighbors for many sources, and sets $S_1$ to be a subset of such sources.
Next, it chooses~$s^*\in S_1$ to be the largest neighbor of $t^*$ amongst the vertices in $S_1$.
Last, it sets $T_2$ to be the $n-k$ largest neighbors of $s^*$.
The procedure is formally defined in \cref{alg:zorro} and illustrated in \cref{fig:zorro}.
In the following lemma, we establish its correctness and key properties.
\vspace{2mm}

\begin{algorithm}[t]
	\caption{$\Z(G,k)$ \label{alg:zorro}}
    \KwIn{balanced temporal biclique $G=(S,T,\lambda)$ of size $n$, and $k\in\{1,\dots, n-1\}$
}
    \KwOut{$(S_1,T_2,Z)$ where $S_1\subseteq S, T_2\subseteq T, Z \subseteq E(G)$
    }
    \For{$s\in S$}{
        let $T_s$ denote the $k$ smallest neighbors of $s$\;
    }
    \label{line:x_hitting} choose $t^* \in T$ such that $|\{ s \in S \mid  t^* \in T_s\}| \ge k$\;
    choose $S_1 \subseteq \{s \in S \mid t^* \in T_s\}$ such that $|S_1| = k$\;
    let $s^* \in S_1$ be the largest neighbor of $t^*$ in $S_1$\;
    let $T_2$ be the set of $n-k$ largest neighbors of $s^*$\;
    $Z \gets \{ \{s,t^*\}: s \in S_1 \} \cup \{ \{s^*,t\} : t \in T_2 \}$ \;
    \Return $(S_1,T_2,Z)$\;
\end{algorithm}

\begin{lemma}\label{lem:zorro}
The algorithm $\Z$ is well-defined and, given a
balanced temporal biclique $G = (S,T,\lambda)$ of size $n$ and a parameter $k \in \{1,\dots, n-1\}$, it outputs a triple
$(S_1, T_2, Z)$, where $S_1 \subseteq S, T_2 \subseteq T,
Z \subseteq E(G)$,
satisfying the following properties:
\begin{itemize}
\item $|S_1|=k$ and $|T_2|=n-k$, 
\item $|Z|\leq n$,
\item $Z$ covers $S_1 \times T_2$.
\end{itemize}
Moreover, the running time of $\Z$ is $\mathcal O(n^2 \log n)$.
\end{lemma}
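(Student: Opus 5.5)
We verify the steps of \cref{alg:zorro} one at a time: first that every choice is possible, then the three properties, and finally the running time.

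\textbf{Well-definedness.} The only nontrivial choice is $t^*$ in \cref{line:x_hitting}, and we handle it by double counting. Each $s\in S$ has exactly $n\ge k$ neighbors, so $T_s$ is well defined with $|T_s|=k$. Summing over sources gives $\sum_{t\in T}|\{s: t\in T_s\}| = \sum_{s\in S}|T_s| = nk$. Since $|T|=n$, some $t^*$ lies in at least $nk/n=k$ of the sets $T_s$, which is what the line requires. Hence $S_1$ with $|S_1|=k$ exists. Since $k\ge1$, $S_1\neq\emptyset$, so $s^*$ exists. Since $n-k\le n$, the set $T_2$ exists.

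\textbf{Size properties.} By construction $|S_1|=k$ and $|T_2|=n-k$. The set $Z$ is the union of a star with $k$ edges centered at $t^*$ and a star with $n-k$ edges centered at $s^*$. Hence $|Z|\le k+(n-k)=n$. If $t^*\in T_2$, the edge $\{s^*,t^*\}$ is counted twice, which only helps.

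\textbf{Coverage.} This is the main point. Fix $s\in S_1$ and $t\in T_2$, and consider the path $(s,t^*,s^*,t)$. It uses the edges $\{s,t^*\}$, $\{t^*,s^*\}$ and $\{s^*,t\}$, all of which lie in $Z$ because $s,s^*\in S_1$ and $t\in T_2$. We need two inequalities.
\begin{itemize}
\item $\lambda(\{s,t^*\})\le\lambda(\{s^*,t^*\})$ holds because $s^*$ is the largest neighbor of $t^*$ within $S_1$.
\item $\lambda(\{s^*,t^*\})\le \lambda(\{s^*,t\})$ holds because $t^*\in T_{s^*}$, i.e.\ $t^*$ is among the $k$ smallest neighbors of $s^*$, while $t$ is among the $n-k$ largest.
\end{itemize}
For the second inequality, fix the tie-broken ordering of the neighbors of $s^*$ used throughout. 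The $k$ smallest and the $n-k$ largest neighbors then form complementary initial and final segments of this ordering, so $t^*$ precedes $t$ and the labels are nondecreasing. The degenerate cases collapse the path to fewer hops:
\begin{itemize}
\item If $s=s^*$, the path is $(s^*,t^*,s^*,t)$. We take the single edge $\{s^*,t\}\in Z$ instead.
\item If $t=t^*$, this cannot happen, since $t^*\in T_{s^*}$ and $T_2$ is disjoint from $T_{s^*}$ under the fixed ordering. (This is the subtle point: $T_s$ and $T_2$ must use the same tie-breaking.)
\end{itemize}
So every pair in $S_1\times T_2$ has a temporal path of at most $3$ hops inside $Z$.

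\textbf{Running time.} Sorting the neighbors of each of the $n$ sources costs $\mathcal O(n\log n)$ per source, for $\mathcal O(n^2\log n)$ in total. This yields every $T_s$, and also $T_2$ from the same sorted order. Counting the occurrences of each $t$ across the sets $T_s$ takes $\mathcal O(nk)\subseteq\mathcal O(n^2)$. The remaining steps are linear. The total is $\mathcal O(n^2\log n)$.
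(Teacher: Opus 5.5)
Your proof is correct and follows the paper's argument essentially step for step: the counting argument over the $kn$ pairs to find $t^*$, the $Z$-shaped path $(s,t^*,s^*,t)$ with the same two label inequalities, and the sorting-based running-time bound. The one small difference is that the paper disposes of the case $t=t^*$ by noting that the direct edge $\{s,t^*\}$ lies in $Z$ (while also remarking that this case cannot occur), so the consistent tie-breaking you flag as the subtle point is not actually needed for correctness.
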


\begin{proof}
Consider the formal description of $\Z(G,k)$ given in \cref{alg:zorro}, and let $(T_s)_{s\in S}$, $s^*$, $t^*$, $S_1$, $T_2$, and $Z$ be as defined by the algorithm.
We first argue that~$t^*$ in line~3 is well-defined. 
For this, note that $E':=\bigcup_{s \in S} \{ \{s, t\}  : t \in T_s \}$ is a set of exactly~$kn$ edges. 
By pigeonhole principle, there is a vertex in~$T$ that is incident to at least~$k$ of these edges, and this vertex is a valid choice for~$t^*$.
It is immediate that the remainder of the algorithm is also well-defined.

By definition, we have that $|S_1|=k$ (line 4) and $|T_2|=n-k$ (line 6), which together imply that $|Z|\leq k+(n-k)=n$ (line 7).
To show that $Z$ covers $S_1 \times T_2$, fix arbitrary~$s\in S_1$ and $t\in T_2$. We show that $Z$ contains a temporal $s$-$t$-path of length at most~3.
If $s=s^*$ or~$t=t^*$, then $Z$ contains the edge $\{s,t\}$, i.e., a temporal $s$-$t$-path of length~1.
(One can even observe that the case $t=t^*$ cannot occur, as $t^*\notin T_2$.)
If $s\neq s^*$ and~$t\neq t^*$, observe that the path $(s,t^*,s^*,t)$ (cf.~\cref{fig:zorro}) has length 3, is contained in $Z$ and we claim that it is temporal.
Indeed, we have $\lambda(\{s,t^*\}) \le \lambda(\{s^*,t^*\})$ because $s^*$ was chosen to be the largest neighbor of $t^*$ amongst the vertices of $S_1$.
We have $\lambda(\{s^*,t^*\}) \le \lambda(\{s^*,t\})$ because~$t^*\in T_{s^*}$, i.e., $t^*$ is one of the $k$ smallest neighbors of $s^*$, while $t\in T_2$, i.e., $t$ is one of the $n-k$ largest neighbors of $s^*$.

It remains to prove the bound on the running time.
For this, note that lines 1--4 can be implemented as follows: 
For each $t \in T$, we initialize a counter with value 0 and an empty list.
For each $s\in S$, we sort its incident edges, which requires time $\bigO(n \log n)$, then add~1 to the counters of the $k$ smallest neighbors of $s$ and add $s$ to their lists.
Carrying this out for all $n$ vertices in $S$ requires a total of $\bigO(n^2 \log n)$ operations.
Then choose $t^*$ to be a vertex in $T$ with maximum counter, computable in time $\bigO(n)$, and choose $S_1$ to be a subset of size~$k$ of the list for $t^*$.
Then, finding $s^*$ in line 5 requires time $\bigO(n)$, finding $T_2$ requires at most $\bigO(n \log n)$ operations (for sorting), and $Z$ can be retrieved in time $\bigO(n)$.
The overall time complexity is therefore dominated by $\mathcal O(n^2 \log n)$.
\end{proof}

    \begin{figure}
        \begin{center}
        \includegraphics[scale=1.15]{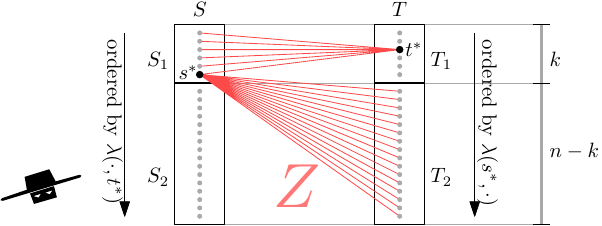}
\vspace{1mm}
\end{center}
      \caption{Illustration of the Zorro algorithm. For every $s\in S_1$ and $t\in T_2$, the path $(s,t^*,s^*,t)$ is temporal. Its Z-shape gives the algorithm its name.}\label{fig:zorro}
    \end{figure}

The rough idea for computing sparse spanners is to apply the algorithm $\Z$ and then recurse on the parts of the graph whose temporal reachabilities are not covered by its output.
This approach is formalized in \cref{alg:bispan}.
To obtain almost linear spanners, it is crucial to choose the parameter~$k$ with which~$\Z$ is invoked carefully.
For now, we allow a general choice of~$k$, viewed as a function of~$n$, and later specify the optimal choice.
Accordingly, \cref{alg:bispan} is parameterized by a function
$f\colon \mathbb{N}_{\geq 3}\to\mathbb{N}$ satisfying
$f(n)\in\{1,\dots, n-1\}$ for every $n\in \mathbb{N}_{\geq 3}$,
and we assume that $f(n)$ can be computed efficiently enough, i.e.,
within time~$\bigO(n^2\log n)$,
so that the running time of a single call to $\BS_f$ is dominated by the execution of $\Z$.
In the following lemma, we establish the correctness of the
algorithm and derive a recursive bound on the size of the resulting
spanner.

\flushbottom
\begin{samepage}
\begin{lemma}\label{thm:bispan}
Let $f\colon \mathbb{N}_{\geq 3} \to \mathbb{N}$ with $f(n)\in\{1,\dots, n-1\}$ for all $n\in \mathbb{N}_{\geq 3}$. Then, for any balanced temporal biclique $G=(S,T,\lambda)$, the algorithm $\BS_f(G)$ outputs a $3$-bispanner of $G$.
Moreover, let $$K(n):=\max\{|\BS_f(G)| : G \text{ is a balanced temporal biclique of size } n\}$$ denote the worst-case size of the computed spanner.
Then we have 
\begin{align}
	K(n) & \leq 2n + 2\cdot K(f(n)) + K(n - f(n)), &\text{for $n\geq 3$},\label{eq:K_condition_rec} \\
    K(n) &\leq n^2,                             &\text{for every $n\geq 1$}. \label{eq:K_condition_base}
\end{align}
\end{lemma}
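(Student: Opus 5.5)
The plan is to argue by strong induction on $n$, following the structure of $\BS_f$ (\cref{alg:bispan}). I assume it works as follows. For $n\le 2$ it returns all edges of $G$. For $n\ge 3$ it sets $k:=f(n)$, calls $\Z(G,k)$ to obtain $(S_1,T_2,Z)$, and writes $S_2:=S\setminus S_1$ and $T_1:=T\setminus T_2$. By \cref{lem:zorro}, $|S_1|=|T_1|=k$ and $|S_2|=|T_2|=n-k$, and the pairs in $S_1\times T_2$ are already certified by $Z$. The remaining pairs form three blocks: $S_1\times T_1$, $S_2\times T_2$ and $S_2\times T_1$. The first two are balanced sub-bicliques of sizes $k$ and $n-k$, handled by recursive calls $\BS_f(G[S_1,T_1])$ and $\BS_f(G[S_2,T_2])$. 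The third block is unbalanced, of size $(n-k)\times k$. I expect the algorithm to reduce it to one more balanced biclique of size $k$, at a cost of at most $n$ additional edges. The natural route is a second Zorro-type star structure, for instance obtained by applying the argument of \cref{lem:zorro} with time reversed and the roles of $S$ and $T$ swapped.

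\textbf{Correctness.} The key observation is that a temporal path found inside an induced sub-biclique $G[S',T']$ uses only edges of $G$ with the same labels, so it is also a temporal path in $G$ using only edges of the output. I would then check that every pair $(s,t)\in S\times T$ falls into one of the following cases.
\begin{enumerate}
\item $s\in S_1$, $t\in T_2$: covered by $Z$, using \cref{lem:zorro}.
\item $(s,t)$ lies in one of the recursive sub-bicliques: covered by the induction hypothesis, since each recursive call returns a $3$-bispanner of its sub-biclique.
\item $s\in S_2$, $t\in T_1$: covered by the extra structure of at most $n$ edges together with the third recursive call.
\end{enumerate}
The base case $n\le 2$ is trivial, since the output contains the direct edge for every pair.

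\textbf{Size bound.} Both inequalities then follow by counting.
\begin{itemize}
\item \eqref{eq:K_condition_base} needs no recursion: the output is always a subset of $E(G)$, and $|E(G)|=n^2$.
\item For \eqref{eq:K_condition_rec}, the output for $n\ge 3$ is the union of $Z$ (at most $n$ edges by \cref{lem:zorro}), the extra structure for $S_2\times T_1$ (at most $n$ edges), two recursive outputs on balanced bicliques of size $f(n)$, and one on a balanced biclique of size $n-f(n)$. Each recursive output is bounded by $K(\cdot)$ at the corresponding size, because $K$ is a maximum over all balanced bicliques of that size.
\end{itemize}
The union bound therefore gives $K(n)\le 2n+2K(f(n))+K(n-f(n))$.

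\textbf{Main obstacle.} The hard step is the block $S_2\times T_1$: certifying it with only $n$ extra edges plus a single size-$k$ recursion. The first thing I would try is a time-reversed application of \cref{lem:zorro}. In the reversed biclique, choose a hub among the relevant sources whose largest-$k$ neighbourhoods overlap heavily, and build a second Z-shaped union of two stars. This should certify every pair from $S_2$ to $T_1$ except for a set of $k$ sources, which together with $T_1$ forms the leftover balanced biclique of size $k$. I would verify temporality of the resulting $3$-paths by the same monotonicity argument as in the proof of \cref{lem:zorro}: one hub is chosen extremal within its star, and the other hub separates its $k$ smallest neighbours from its $n-k$ largest.
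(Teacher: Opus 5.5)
\textbf{Genuine gap in the step you flag as the main obstacle.} Your reconstruction of $\BS_f$ is not the paper's algorithm. Nothing in your plan shows that $S_2\times T_1$ can be certified with at most $n$ extra edges, up to a leftover balanced biclique of size $k$. The Zorro-type argument you sketch cannot deliver this.

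\begin{itemize}
\item A Z-shape with hubs $t^*,s^*$ reaches exactly the targets $t$ with $\lambda(\{s^*,t\})\ge\lambda(\{s^*,t^*\})$. To contain the whole prescribed set $T_1$, you need $\lambda(\{s^*,t^*\})\le\min_{t\in T_1}\lambda(\{s^*,t\})$.
\item Once that holds, nothing forces $n-2k$ sources $s\in S_2$ to satisfy $\lambda(\{s,t^*\})\le\lambda(\{s^*,t^*\})$. No pigeonhole argument applies.
\item Concretely, the time-reversed Zorro inside $G[S_2,T_1]$ with parameter $m$ certifies $n-k-m$ sources. It only guarantees $\lceil km/(n-k)\rceil$ targets of $T_1$. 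Guaranteeing all of $T_1$ forces $m\ge n-k$, which leaves no sources at all.
\end{itemize}

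So the uncovered part of $S_2\times T_1$ is not a $k\times k$ biclique, and \eqref{eq:K_condition_rec} does not follow. Your case~3 simply assumes what has to be shown. (As written, the plan is also ill-defined when $f(n)>n/2$, since then $S_2$ has fewer than $k$ sources.)

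\textbf{The missing idea.} Drop the partition into blocks induced by the first call. The paper calls $\Z$ a second time on the \emph{whole} biclique with the complementary parameter, $(S_1',T_2',Z')=\Z(G,n-k)$, so that $|S_1'|=n-k$ and $|T_2'|=k$. It then recurses on three sub-bicliques:
\begin{itemize}
\item $G[S_1,T\setminus T_2]$, of size $k$;
\item $G[S\setminus S_1',T_2']$, of size $k$;
\item $G[S\setminus S_1,T\setminus T_2']$, of size $n-k$.
\end{itemize}
These overlap, but together they cover every pair. If $s\in S_1$, the pair is handled by $Z$ or the first recursion. If $t\in T_2'$, it is handled by $Z'$ or the second. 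Otherwise $(s,t)\in(S\setminus S_1)\times(T\setminus T_2')$ and the third recursion handles it.

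This needs only two applications of \cref{lem:zorro} and the induction hypothesis, and $|Z\cup Z'|\le 2n$ gives \eqref{eq:K_condition_rec}. The rest of your outline is fine: the base case, \eqref{eq:K_condition_base}, the observation that paths in induced sub-bicliques are paths in $G$, and bounding each recursive output by $K$.
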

\end{samepage}

\begin{proof}
Consider the formal description of $\BS_f$ given in \cref{alg:bispan} and let $S_1$, $T_2$, $Z$, $S_1'$, $T_2'$, and $Z'$ be as defined by the algorithm.
First, observe that all lines of the algorithm are well-defined: In particular, in line 6, all recursive calls are made on balanced bicliques.
Moreover, these all have size strictly less than $n$ so that the algorithm terminates.

    We prove by strong induction on $n$ that the output is a 3-bispanner of $G$.
	First, if $n\leq 2$, the algorithm outputs the entire edge set, which is a 1-bispanner of $G$.
    For the inductive step, assume $n\geq 3$ and let $s\in S$ and $t\in T$.
By \cref{lem:zorro} and the induction hypothesis, we obtain that there exists an $s$-$t$-path of length at most 3 in
\begin{itemize}
\item $Z\cup \BS_f(G[S_1,T\setminus T_2])$ if $s\in S_1$,
\item $Z'\cup \BS_f(G[S\setminus S_1',T'_2])$ if $t\in T_2'$,
\item  $\BS_f(G[S\setminus S_1, T\setminus T_2'])$ if $s\notin S_1$ and $t\notin T_2'$.
\end{itemize}
This covers all cases, so that the output is indeed a valid 3-bispanner of $G$.
For the bounds on~$K$, \eqref{eq:K_condition_base} follows from the fact that the output is a subset of the $n^2$ edges of $G$, and \eqref{eq:K_condition_rec} follows from $|Z\cup Z'|\leq 2n$, $|S_1|=|S\setminus S_1'|=f(n)$, and $|S\setminus S_1|=n-f(n)$.
\end{proof}

\begin{algorithm}[t]
	\caption{$\BS_f(G)$ \label{alg:bispan}} 
	\KwIn{temporal biclique $G = (S,T,\lambda)$ with $|S|=|T|=:n$}
	\KwOut{set of edges $E' \subseteq E(G)$ that is a $3$-bispanner of $G$}
	\If{$n \leq 2$}{
			\Return $E(G)$\;
		}
		$k \gets f(n)$\;
		$(S_1,T_2,Z) \gets \Z(G,k)$\;
		$(S'_1,T'_2,Z') \gets \Z(G,n-k)$\;
		\Return $Z\cup Z'\cup \BS_f(G[S_1,T\setminus T_2]) \cup \BS_f(G[S\setminus S_1',T'_2]) \cup \BS_f(G[S \setminus S_1,T\setminus T'_2])$\;
\end{algorithm}

It remains to choose the function $f$. For this, we set
\begin{equation}
\label{eq:def_f}
    f(n):=\left\lceil n e^{-\sqrt{\ln n}}\right\rceil .
\end{equation}
For every $n\geq 3$, we have $2\leq f(n)\leq n-1$, so this is a valid
choice of $f$ in \cref{thm:bispan}.
We next carefully analyze the size of the resulting spanners using the two bounds established in \cref{thm:bispan}.
We prove something slightly stronger here that will later also help us to bound the running time of the algorithm.

\begin{lemma}\label{lem:rec}
Let $H\colon \mathbb{N}\to\mathbb{N}$ be a function satisfying
\begin{align}
	H(n) & \leq 2n + 2\cdot H(f(n)) + H(n - f(n)), &\text{for $n\geq 55$},\label{eq:K_condition_rec_lem} \\
    H(n) & \leq C n^2,                             &\text{for $n\leq 54$}. \label{eq:K_condition_base_lem}
\end{align}
for some constant $C\geq 1$, where $f$ is defined as in \eqref{eq:def_f}.
Then $H(n)\leq Cn e^{2\sqrt{\ln n}}$ for every~$n\in\mathbb{N}$.
\end{lemma}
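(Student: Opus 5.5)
We prove $H(n)\le C\,g(n)$ with $g(n):=n e^{2\sqrt{\ln n}}$ by strong induction on $n$. For $n\le 54$ we use \eqref{eq:K_condition_base_lem}: $Cn^2\le Cne^{2\sqrt{\ln n}}$ is equivalent to $\ln n\le 2\sqrt{\ln n}$, i.e.\ $\ln n\le 4$, i.e.\ $n\le e^4\approx 54.6$. This is exactly why the threshold is $54$. For $n\ge 55$ we apply \eqref{eq:K_condition_rec_lem} together with the induction hypothesis for $k:=f(n)$ and $n-k$; both are smaller than $n$ because $2\le f(n)\le n-1$. It then suffices to show
\[
2n + 2C\,k e^{2\sqrt{\ln k}} + C(n-k)e^{2\sqrt{\ln (n-k)}} \le C n e^{2\sqrt{\ln n}} .
\]
Since $C\ge 1$, we may replace $2n$ by $2Cn$ and divide by $C$.

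Write $x:=\sqrt{\ln n}$, so that $g(n)=ne^{2x}$ and $k\approx ne^{-x}$. We bound the three terms separately.
\begin{itemize}
\item \emph{The term $(n-k)e^{2\sqrt{\ln(n-k)}}$.} Since $g$ is increasing, and more precisely by concavity of $\sqrt{\cdot}$,
\[
\sqrt{\ln(n-k)}\le x+\frac{\ln(1-k/n)}{2x}\le x-\frac{k}{2xn}.
\]
Hence $(n-k)e^{2\sqrt{\ln(n-k)}}\le (n-k)e^{2x}$; the extra factor $e^{-k/(xn)}\le 1$ can simply be dropped. This gives a savings of $k e^{2x}$ relative to $g(n)$.
\item \emph{The term $2ke^{2\sqrt{\ln k}}$.} We need $\ln k\approx x^2-x$, so $\sqrt{\ln k}\approx \sqrt{x^2-x}\le x-\tfrac12$. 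Then $e^{2\sqrt{\ln k}}\le e^{2x-1}$, and
\[
2ke^{2\sqrt{\ln k}}\le \tfrac{2}{e}\,ke^{2x}.
\]
The ceiling in $f$ requires care here: $k\le ne^{-x}+1$. Use $\ln k\le \ln(ne^{-x})+\ln(1+e^{x}/n)$, and note that $e^{x}/n$ is tiny for $n\ge 55$.
\item \emph{The term $2n$.} It remains to absorb it into the leftover savings $(1-2/e)\,ke^{2x}\approx (1-2/e)\,ne^{x}$. Since $e^{x}\ge e^{\sqrt{\ln 55}}\approx 7.4$ and $1-2/e\approx 0.26$, we get $(1-2/e)e^{x}\approx 1.9$, which is not quite $2$ at $n=55$. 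So at this step we must also keep the dropped factor from the first bullet and the slack in $\sqrt{x^2-x}\le x-\tfrac12-\tfrac1{8x}$.
\end{itemize}

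The main obstacle is the numerical verification at small $n$ (near $55$), where the asymptotic margins are thin. My plan is to keep the exact refinement $\sqrt{x^2-x}\le x-\tfrac12-\tfrac{1}{8x}$ and the factor $e^{-k/(xn)}$ from the first bullet. Combining these turns the needed inequality into a one-variable inequality in $x\ge\sqrt{\ln 55}\approx 2.0$, namely roughly
\[
2e^{-x}+2e^{-1-1/(4x)}+(1-e^{-x})e^{-e^{-x}/x}\le 1 .
\]
I would verify that this holds at $x=2$ and that the left side is decreasing for larger $x$. If the margin still fails at the boundary, the fallback is to handle the ceiling and the small range by a finite check, or to note that the base case extends slightly, since the base-case inequality is tight only at $e^4$.
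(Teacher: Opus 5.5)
Your strategy is the paper's: strong induction with base case $\ln n\le 4$; the bound $\sqrt{\ln f(n)}\le x-\tfrac12$ for the $2H(f(n))$ term; the concavity saving $e^{-k/(xn)}$ for the $H(n-f(n))$ term; and absorption of $2n$ into the savings. The concavity saving is exactly the paper's factor $e^{-1/(Le^{L})}$, obtained there from $\ln(n-k)\le L^2-e^{-L}$. You handle the ceiling by keeping the integer $k=f(n)$ throughout. The paper instead passes to $k+1$ and $n-k$ with real $k=ne^{-L}$, using monotonicity of $x\mapsto xe^{2\sqrt{\ln x}}$.

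The problem is the closing step, which you leave unverified and set up incorrectly.
\begin{itemize}
\item Your displayed one-variable inequality is mis-normalized. Dividing by $ne^{2x}$ with $k/n\approx e^{-x}$ gives $2e^{-2x}+2e^{-x-1-1/(4x)}+(1-e^{-x})e^{-e^{-x}/x}\le 1$. The left side of your version is about $1.73$ at $x=2$, so the check you propose would fail.
\item The extra slack $\tfrac1{8x}$ is not available once the ceiling is accounted for. Here $(x-\tfrac12-\tfrac1{8x})^2-(x^2-x)=\tfrac1{8x}+\tfrac1{64x^2}\approx 0.066$ near $x=2$, which is less than what the ceiling adds to $\ln k$. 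Concretely, at $n=55$ you have $f(55)=8$ and $\sqrt{\ln 8}\approx 1.4420>1.4394\approx x-\tfrac12-\tfrac1{8x}$.
\item The fallback of extending the base case is not available either. The hypothesis gives $H(n)\le Cn^2$ only for $n\le 54$, and $n^2>ne^{2\sqrt{\ln n}}$ for $n>e^4$.
\end{itemize}

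Neither refinement is needed, so the gap closes easily. Keep only two ingredients:
\begin{itemize}
\item the bound $\sqrt{\ln k}\le x-\tfrac12$, which holds since $\ln(1+e^{x-x^2})\le\tfrac14$ for $x\ge 2$;
\item the factor $e^{-k/(xn)}$.
\end{itemize}
With $\kappa:=k/n\ge e^{-x}$, you then need $2e^{-2x}+\tfrac2e\kappa+(1-\kappa)e^{-\kappa/x}\le 1$. On the relevant range $\kappa\le e^{-x}+1/n<0.16$, the left side is decreasing in $\kappa$, so it suffices to take $\kappa=e^{-x}$. This gives about $0.944$ at $x=2$, and the bound stays below $1$ for all $x\ge 2$ (e.g.\ via $e^{-y}\le 1-y+y^2/2$). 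This is the counterpart of the paper's closing inequality $2+2e^{-(L-1)^2}-e^{L}\bigl(1+\frac{3/4}{L+1/4}-\frac2e\bigr)\le 0$.

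Alternatively, your crude version, without any refinement, already works thanks to integrality of $f(n)$. You need $2n\le(1-\tfrac2e)f(n)e^{2x}$. For $n\ge 61$ this follows from $(1-\tfrac2e)e^{x}\ge 2$. For $55\le n\le 60$ one has $f(n)=8$, and a direct check works; the tightest case is $n=60$, with about $120.96$ versus $120$.
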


\begin{proof}
We prove the statement by strong induction on $n$.
First, let $n\leq \lfloor e^4 \rfloor = 54$, in particular $\sqrt{\ln n}\leq 2$. 
Then
\[
    H(n)\overset{\eqref{eq:K_condition_base_lem}}{\leq} C n^2
    =C n e^{\ln n}
	=C n e^{\sqrt{\ln n}\cdot \sqrt{\ln n}}
	\leq C n e^{2\sqrt{\ln n}},
\]
so the claim holds for $n\leq 54$.

Now, let $n\geq 55$ and suppose that the claim holds for every positive
integer smaller than~$n$. Put
\[
    L:=\sqrt{\ln n},\qquad \text{and} \qquad
    k:=n e^{-L},
\]
so that $f(n)=\lceil k \rceil$.
Recall that $f(n)<n$ and $n-f(n)<n$, so that we obtain
\begin{align}
H(n)&\overset{\eqref{eq:K_condition_rec_lem}}{\leq} 2n+2H(\lceil k \rceil )+H(n-\lceil k \rceil) \nonumber \\
\overset{\text{ind.~hyp.}}&{\leq} 2n + 2C \lceil k \rceil e^{2\sqrt{\ln \lceil k \rceil}} +  C(n-\lceil k \rceil) e^{2\sqrt{\ln (n-\lceil k \rceil)}}\nonumber\\
&\leq 2n + 2 C(k+1) e^{2\sqrt{\ln(k+1)}} + C(n-k)e^{2\sqrt{\ln(n-k)}},\label{eq:bound_on_K}
\end{align}
where in the last inequality, we have used that the function $x \mapsto xe^{2\sqrt{\ln x}}$ is increasing for~$x\geq 1$.
Next, we derive bounds for the last two summands of the expression \eqref{eq:bound_on_K}.

Since $n\geq 55$, it can be easily checked that $k=ne^{-\sqrt{\ln n}}\geq 4$ (in fact $n\geq 24$ suffices).
To bound the second summand of \eqref{eq:bound_on_K}, observe first that
\begin{align}
\ln(k+1)
&=\ln\left( k \left(1+\frac{1}{k}\right) \right)
= \ln n-\sqrt{\ln n} +\ln\left(1+\frac1k\right) 
    =L^2-L+\ln\left(1+\frac1k\right)\nonumber\\
    &\overset{(*)}{\leq} L^2-L+\frac1k
\hspace{2mm}\overset{k\geq 4}{\leq} \hspace{2mm}
L^2 - L + \frac{1}{4}
= \left(L-\frac12\right)^2,\label{eq:ln_kplusone}
\end{align}
where $(*)$ follows from the fact that $\ln(1+x)\leq x$ for every $x\geq 0$.
Consequently, we obtain the following bound
\begin{equation}
\label{eq:first_summand}
    (k+1) e^{2\sqrt{\ln(k+1)}}
\overset{\eqref{eq:ln_kplusone}}{\leq} (n e^{-L}+1)e^{2L-1}
    =n e^{L-1}+e^{2L-1}.
\end{equation}

We proceed with bounding the last summand of \eqref{eq:bound_on_K}.
Using the standard fact that $\ln(1-x)\leq -x$ for $x\in [0,1)$, we obtain
\begin{align}
\ln(n-k) 
&= \ln(n (1-e^{-L}))
=\ln n+\ln(1-e^{-L})\le  L^2 - e^{-L}
=\left(L-\frac{e^{-L}}{2L}\right)^2 - \left(\frac{e^{-L}}{2L}\right)^2\nonumber\\
&\leq \left(L-\frac{e^{-L}}{2L}\right)^2.\label{eq:ln_nminusk}
\end{align}
Therefore,
\begin{align}
(n-k)e^{2\sqrt{\ln(n-k)}}
&=n(1-e^{-L})e^{2\sqrt{\ln(n-k)}} \nonumber\\
&\overset{\eqref{eq:ln_nminusk}}{\leq} n(1-e^{-L})e^{2L-\frac{1}{Le^L}}
=ne^{2L}(1-e^{-L})e^{-\frac{1}{Le^L}}\nonumber\\
&\overset{(*)}{\leq} ne^{2L}\frac{1-e^{-L}}{1+\frac{1}{Le^L}}
= ne^{2L}\left( \frac{L-Le^{-L}}{L+e^{-L}}\right)
=ne^{2L}\left( 1- \frac{(L+1)e^{-L}}{L+e^{-L}}\right)\nonumber\\
&=ne^{2L} - ne^{L}\frac{L+1}{L+e^{-L}}
\overset{L\geq 2}{\leq} ne^{2L} - ne^{L}\frac{L+1}{L+1/4}\nonumber\\
&= ne^{2L} - ne^{L}\left(1+\frac{3/4}{L+1/4}\right)
,
\label{eq:second_summand}
\end{align}
where we used for $(*)$ that $e^{-x}\leq 1/(1+x)$ for every $x\geq0$, and in the next line that~$L=\sqrt{\ln n}\geq 2$ as~$n\geq 55$ (and $\sqrt{\ln 55}\thickapprox 2.0018$).

Summarizing our analysis so far, we have
\begin{align*}
H(n)
&\overset{\eqref{eq:bound_on_K}}{\leq }
2n + 2 C(k+1) e^{2\sqrt{\ln(k+1)}} + C(n-k)e^{2\sqrt{\ln(n-k)}}\\
\overset{\eqref{eq:first_summand}\eqref{eq:second_summand}}&{\leq}
2n + C \left( 2 n e^{L-1}+2 e^{2L-1} +ne^{2L} - ne^{L}\left(1+\frac{3/4}{L+1/4}\right)\right)\\
\overset{C\geq 1}&{\leq}Cne^{2L} +Cn\left(2+e^L\cdot \frac{2}{e}+2\frac{e^{2L-1}}{e^{L^2}}-e^L\left(1+\frac{3/4}{L+1/4}\right)\right)\\
&=Cne^{2L} +Cn\left(2+2e^{-(L-1)^2}-e^L\left(1+\frac{3/4}{L+1/4}-\frac{2}{e}\right)\right)\\
&\leq Cne^{2L},
\end{align*}
where the last inequality follows from $2+2e^{-(L-1)^2}-e^L\left(1+\frac{3/4}{L+1/4}-\frac{2}{e}\right)\leq 0$ for $L\geq 2$.

\end{proof}

Having established the size of the resulting spanner, it remains only to analyze the running time of~$\BS_f$ for the choice of~$f$ given in \eqref{eq:def_f}.

\begin{lemma}
\label{lem:running_time}
The running time of $\BS_f$ is $\bigO(n^{2+o(1)})$.
\end{lemma}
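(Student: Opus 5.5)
Let $R(n)$ denote the worst-case running time of $\BS_f$ on a balanced temporal biclique of size $n$. We derive a recursion for $R$ of the same shape as \eqref{eq:K_condition_rec_lem}, but with a quadratic rather than linear additive term. A single call with $n\geq 3$ consists of the following steps:
\begin{itemize}
\item two invocations of $\Z$, each costing $\bigO(n^2\log n)$ by \cref{lem:zorro};
\item the evaluation of $f(n)$, which by assumption costs $\bigO(n^2\log n)$;
\item the construction of the three induced subbicliques and the union of the returned edge sets, each of size at most $n^2$, which costs $\bigO(n^2)$;
\item three recursive calls, on sizes $f(n)$, $f(n)$ and $n-f(n)$.
\end{itemize}
Hence, for some constant $c$,
\[
R(n)\leq c\,n^2\log n + 2R(f(n)) + R(n-f(n)) \quad (n\geq 3),
\]
and $R(n)\leq c$ for $n\leq 2$.

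Define $H'(n):=R(n)/(n\log n)$, or more cleanly substitute $R(n)=n\log n\cdot H(n)$ for $n\geq 2$. Since $m\log m\leq n\log n$ for $m\leq n$, we get
\[
n\log n\, H(n)\leq c\,n^2\log n + 2f(n)\log f(n)\, H(f(n)) + (n-f(n))\log(n-f(n))\,H(n-f(n)).
\]
The weights $f(n)\log f(n)$ and $(n-f(n))\log(n-f(n))$ are not of the form $n\log n$ times the arguments' share, so this normalization is awkward. A cleaner route is to define $H(n):=\lceil R(n)/(c' n\log n)\rceil$ and bound $m\log m\leq m\log n$, so that after dividing by $n\log n$ it suffices that $H$ satisfies \eqref{eq:K_condition_rec_lem} up to the additive term. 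The additive term then becomes $n$ times a constant, not $2n$, because the $n^2\log n$ was divided by $n\log n$. That inequality only holds for $H$ defined relative to a fixed top-level $n$, however, which breaks strong induction.

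Instead I would simply reprove a variant of \cref{lem:rec} directly: claim $R(n)\leq C n^2\log n\cdot e^{2\sqrt{\ln n}}$, or more crudely $R(n)\leq C n^{2} e^{2\sqrt{\ln n}}\log n$. The inductive step is exactly the computation of \cref{lem:rec} with an extra factor. Since $\log f(n),\log(n-f(n))\leq \log n$, factor $\log n$ out. The summands become $2C\lceil k\rceil^2 e^{2\sqrt{\ln\lceil k\rceil}}$ and $C(n-k)^2e^{2\sqrt{\ln(n-k)}}$, and these are each at most $n$ times the corresponding linear terms. It then suffices to bound $2k(\dots)+(n-k)(\dots)$ using \eqref{eq:first_summand} and \eqref{eq:second_summand} multiplied by $n$, and the additive term $c\,n^2$ is absorbed exactly as the $2n$ was. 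In fact the linear-weight argument is stronger than needed, since squaring only shrinks the subproblem contributions relative to $n^2$.

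A much simpler alternative, which I would actually use, goes through the recursion tree. Each node at size $m$ costs $\bigO(m^2\log m)\leq \bigO(m\cdot n\log n)$. Summing $m$ over all nodes of the recursion tree is exactly bounded by a function satisfying \eqref{eq:K_condition_rec_lem}: take $H(m):=$ the total sum of sizes over the recursion subtree, which satisfies $H(m)\leq m+2H(f(m))+H(m-f(m))$ and $H(m)\leq 2m\leq 2m^2$ at the leaves. By \cref{lem:rec} with $C=2$, $H(n)\leq 2ne^{2\sqrt{\ln n}}$. Therefore $R(n)=\bigO(n\log n\cdot ne^{2\sqrt{\ln n}})=\bigO(n^{2+o(1)})$, since $e^{2\sqrt{\ln n}}\log n=n^{o(1)}$.

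The main obstacle is making the leaf condition \eqref{eq:K_condition_base_lem} hold for all $n\leq 54$, not only for $n\leq 2$. Sizes $3\leq m\leq 54$ do recurse, but there the tree is of constant size, so $H(m)\leq C m^2$ holds for a suitable absolute constant $C$.
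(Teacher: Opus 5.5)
Your recursion-tree argument is correct and matches the paper's proof: you bound the total instance size $\sum_q n_q$ over the recursion tree via \cref{lem:rec}, using a constant large enough to absorb sizes up to $54$, and then convert this into a time bound. The one difference is the last step, where you use $n_q^2\log n_q\le n_q\cdot n\log n$ instead of the paper's superadditivity of $x^2\log(x+1)$; this is simpler and even gives the slightly sharper bound $\bigO(n^{2+2/\sqrt{\ln n}}\log n)$ instead of $\bigO(n^{2+4/\sqrt{\ln n}}\log n)$.
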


\begin{proof}
For an execution of $\BS_f$ on a biclique $G$, consider all the issued
recursive calls of the algorithm (which can be represented as the recursion tree).
For each call $q$ to the algorithm (i.e., a node of the recursion tree),
let $n_q$ denote the size of the balanced biclique on which
the recursion is called. Then let $A(G):=\sum_q n_q$ denote the sum
of the instance sizes, and for $n\in\mathbb{N}$, let~$A(n):=\max\{A(G): G \text{ is a balanced biclique of size } n\}$ denote the worst-case sum.

We trivially have $A(n)=\bigO (1)$ for $n\leq 55$ and observe that $A$ fulfills the recurrence~$A(n)\leq n+  2 A(f(n)) + A(n-f(n))$.
In particular, the conditions of \cref{lem:rec} are fulfilled (choosing some large enough constant $C\geq 1$), so that we have $A(n)=\bigO(n^{1+2/\sqrt{\ln n}})$.

Next, observe that the running time of a single call to $\BS_f$
on a biclique of size~$n_q$ is $\bigO(n_q^2 \log (n_q+1))$ by \cref{lem:zorro}.
Since the function $h \colon \mathbb{R}_{\geq 0} \to \mathbb{R}, x \mapsto x^2 \log (x+1)$ is superadditive (because it is convex and $h(0)=0$), we obtain that the total running time is bounded by
\begin{equation*}
\sum_q \bigO(n_q^2 \log (n_q+1))
\leq \bigO(A^2(n)\log(A(n)+1))
=
\bigO(n^{2+4/\sqrt{\ln n}}\log n)=\bigO(n^{2+o(1)}).\qedhere
\end{equation*}
\end{proof}

To summarize, \cref{thm:bispan} establishes that the algorithm $\BS_f$ outputs a valid 3-bispanner of every balanced temporal biclique,  \cref{lem:rec} (applied with $C=1$) establishes that its size is at most $ne^{2\sqrt{\ln n}}=n\cdot n^{2\sqrt{\ln n}/\ln n}=n^{1+2/\sqrt{\ln n}}$, and \cref{lem:running_time} shows that the algorithm runs in time~$\bigO(n^{2+o(1)})$.
Together, these results prove our main result for temporal bicliques, \cref{thm:3-bispanners}.
By \cref{obs:biclique}, this result carries over to temporal cliques and thus completes the proof of \cref{thm:3-spanners}.

 \section{Temporal cliques of bounded lifetime}

In this section, we turn to considering temporal bicliques of bounded lifetime $L$, meaning that edge labels belong to $\{1,\dots, L\}$, and prove (a bipartite analogue of) \cref{thm:lifetime}.
We assume throughout this section that $L\geq 2$, because otherwise, the biclique trivially contains a 3-bispanner of size $2n-1$ (the union of a star centered at a source and a star centered at a target).

The following lemma is similar in spirit to the algorithm~$\Z$.
However, while the sizes of the sets $S_1$ and $T_2$ found by~$\Z$ sum to~$n$, the bounded-lifetime assumption allows us to find sets whose total size is a constant factor larger than~$n$.

\begin{lemma}\label{lem:large-lifetime-rectangle}
Let $G=(S,T,\lambda)$ be a balanced temporal biclique of size $n$
and lifetime at most~$L$.
Then there are sets $S'\subseteq S$, $T'\subseteq T$ and
$Z\subseteq E(G)$, computable in time $\bigO(n^3)$, such that
\begin{itemize}
    \item $Z$ covers $S' \times T'$,
    \item $|Z|\leq |S'|+|T'|$, and
    \item $|S'|+|T'|\geq n+n/L$.
\end{itemize}
\end{lemma}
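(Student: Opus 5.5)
The plan is to generalize the Z-shaped construction of $\Z$: fix a pivot pair $(s^*,t^*)$ and take every source and target that the Z-shaped path through that pivot can serve. Then use an averaging argument over all $n^2$ possible pivots, where the bounded lifetime enters through ties.

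\emph{Construction for a fixed pivot.} For $s^*\in S$ and $t^*\in T$ let $\ell:=\lambda(\{s^*,t^*\})$. Define
\[
S'(s^*,t^*):=\{s\in S:\lambda(\{s,t^*\})\le \ell\},\qquad T'(s^*,t^*):=\{t\in T:\lambda(\{s^*,t\})\ge \ell\},
\]
and $Z:=\{\{s,t^*\}:s\in S'\}\cup\{\{s^*,t\}:t\in T'\}$. Clearly $|Z|\le |S'|+|T'|$, and $s^*\in S'$, $t^*\in T'$.

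\emph{Coverage.} Take $s\in S'$ and $t\in T'$. If $s=s^*$ or $t=t^*$, the edge $\{s,t\}$ itself lies in $Z$. Otherwise the path $(s,t^*,s^*,t)$ lies in $Z$, and it is temporal because
\[
\lambda(\{s,t^*\})\le \ell=\lambda(\{s^*,t^*\})\le\lambda(\{s^*,t\}).
\]
So $Z$ covers $S'\times T'$ for every choice of pivot. It remains to show that some pivot gives $|S'|+|T'|\ge n+n/L$.

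\emph{Averaging over pivots (the main step).} I would sum $|S'(s^*,t^*)|+|T'(s^*,t^*)|$ over all $n^2$ pivots and treat the two terms separately.

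Fix a column $t^*$ and let $d_1,\dots,d_L$ be the numbers of sources $s$ with $\lambda(\{s,t^*\})=1,\dots,L$. Then $\sum_{s^*}|S'(s^*,t^*)|$ counts the ordered pairs $(s,s^*)$ with $\lambda(\{s,t^*\})\le\lambda(\{s^*,t^*\})$. Each unordered pair with distinct labels is counted once. Each ordered pair with equal labels, including $s=s^*$, is counted, which gives $\sum_\ell d_\ell^2$ such pairs. Hence
\[
\sum_{s^*}|S'(s^*,t^*)| = \frac{n^2+\sum_\ell d_\ell^2}{2}\ge \frac{n^2+n^2/L}{2},
\]
where the inequality is Cauchy--Schwarz, using that only $L$ labels are available. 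This is exactly where the lifetime bound is used.

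Symmetrically, fixing a row $s^*$ gives $\sum_{t^*}|T'(s^*,t^*)|\ge (n^2+n^2/L)/2$. Summing over the $n$ columns and the $n$ rows, the total over all pivots is at least $n^3+n^3/L$. Therefore some pivot achieves $|S'|+|T'|\ge n+n/L$.

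The delicate point is the tie counting. Pairs with equal labels count in both directions, and this is the only source of the surplus beyond the trivial average of about $n$. One has to check that the non-strict inequalities in the definitions of $S'$ and $T'$ are consistent with temporality, which holds because temporal paths allow nondecreasing labels.

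\emph{Running time.} Evaluating $|S'|+|T'|$ for one pivot takes $\bigO(n)$ time. Trying all $n^2$ pivots and then building $Z$ for the best one therefore takes $\bigO(n^3)$ time in total.
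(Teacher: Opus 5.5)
Your proof is correct and follows the paper's argument essentially verbatim. It uses the same pivot sets $S_{st},T_{st}$ covered by the Z-shaped path $(s',t,s,t')$, the same row/column double counting with ties counted in both directions, the same Cauchy--Schwarz bound $\sum_\ell d_\ell^2\ge n^2/L$ giving a total of at least $n^3(1+1/L)$, and the same $\bigO(n^3)$ search over all pivots.
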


\begin{proof}
For $s\in S$, $t\in T$, let
\[
S_{st}:=
\{s'\in S:\lambda(\{s',t\})\leq\lambda(\{s,t\})\}
\qquad \text{and} \qquad
T_{st}:=\{t'\in T:\lambda(\{s,t'\})\geq\lambda(\{s,t\})\}.
\]
Observe that the set $Z_{st}:=\{\{s',t\} : s'\in S_{st}\} \cup \{\{s,t'\} : t'\in T_{st}\}$ has size at most~$|S_{st}|+|T_{st}|$ and covers $S_{st} \times T_{st}$: Indeed, for $s'\in S_{st}\setminus \{s\}$ and $t'\in T_{st}\setminus \{t\}$, the path $(s',t,s,t')$ is temporal and has length 3.
Therefore, it suffices to show that there exists $s,t$ such that~$|S_{st}|+|T_{st}|\geq n +n/L$.

For this, fix some $s\in S$ and our first goal is to derive a lower bound on $\sum_{t\in T}|T_{st}|$.
Take some $\{t_1,t_2\}\in \binom{T}{2}$ such that $\lambda(\{s,t_1\})\leq \lambda(\{s,t_2\})$. 
This pair contributes at least 1 to the sum as $t_2\in T_{st_1}$.
If  $\lambda(\{s,t_1\})=\lambda(\{s,t_2\})$, the pair even contributes 2 to the sum as we also have  $t_1\in T_{st_2}$.
Moreover, since $t \in T_{st}$, each $t\in T$ also contributes 1 to the sum.
Letting~$m_i$ denote the number of edges incident to $s$ of label $i$ ($i\in\{1,\dots, L\}$), we therefore obtain
\begin{align*}
\sum_{t\in T}|T_{st}|
&\geq 
n + \binom{n}{2}+\sum_{i=1}^L \binom{m_i}{2}
= n + \frac{n^2-n}{2}+ \sum_{i=1}^L \frac{m_i^2-m_i}{2}\\
\overset{\sum_i m_i=n}&{=} \hspace{2mm}
\frac{1}{2}\cdot \left(n^2 + \sum_{i=1}^{L}m_i^2\right)
\overset{\text{Cauchy-Schwarz}}{\geq}
 \frac{1}{2}\cdot \left(n^2 + \frac{n^2}{L}\right)
=\frac{n^2}{2}\left(1+\frac{1}{L}\right).
\end{align*}
By analogous argumentation, we obtain the same bound for $\sum_{s\in S} |S_{st}|$ for any fixed $t$. We obtain
\begin{align*}
\sum_{s\in S, t\in T} (|S_{st}|+|T_{st}|) \geq 2n \cdot \frac{n^2}{2}\left(1+\frac{1}{L}\right) 
= n^3\left(1+\frac{1}{L}\right).
\end{align*} 
Therefore, there exists $s\in S$ and $t\in T$ such that $|S_{st}|+|T_{st}|\geq n(1+1/L)$.
Last, observe that, for fixed $s$ and $t$, the sets $S_{st}$ and $T_{st}$ can be computed in time~$\bigO(n)$.
Therefore, a pair~$(s,t)$ maximizing $|S_{st}|+|T_{st}|$ can be found in time~$\bigO(n^3)$, after which $S', T'$, and $Z$ can be retrieved in time~$\bigO(n)$.
\end{proof}

A key difficulty when applying recursion is that we will have to handle unbalanced bicliques.
The next lemma handles the case where the difference in size is not too large and allows us to remove vertices from the larger bipartition side, at the cost of at most linearly many edges, resulting in a balanced biclique.

\begin{lemma}\label{lem:one-balancing-step}
Let $r\in\mathbb{N}$ and $G=(S,T,\lambda)$ be a temporal biclique of lifetime at most $L$
with
\[
|S|=n,\qquad |T|=n+r,
\qquad\text{and}\qquad
r<n/(L-1).
\]
Then there are a set $T_0\subseteq T$ of size $r$ and a set
$F\subseteq E(G)$ of at most $2n+r$ edges,
computable in time $\bigO(n^2)$, 
such that~$F$ covers $S\times T_0$.
The symmetric statement holds when~$|S|=n+r$ and~$|T|=n$.
\end{lemma}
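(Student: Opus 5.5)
The plan is to build $F$ as a union of several ``Zorro-shaped'' structures that all share one target set $T_0$. Each structure is a star at a hub target $t_i$ that collects a group $S_i$ of sources, followed by a star from a hub source $s_i$ to $T_0$. If the groups $S_1,\dots,S_g$ partition $S$, then $F$ has at most $n+gr$ edges. Since $r<n/(L-1)$ gives $(L-1)r<n$, any $g\le L$ yields $n+gr\le n+r+(L-1)r< 2n+r$. So the budget allows up to $L$ groups, one per label value, and the whole proof reduces to finding $T_0$ and at most $L$ hub pairs.

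\textbf{Step 1: coverage of one Zorro structure.} This is the argument of \cref{lem:zorro}. If every $s\in S_i$ satisfies $\lambda(\{s,t_i\})\le\lambda(\{s_i,t_i\})=:\ell_i$, and every $t\in T_0$ satisfies $\lambda(\{s_i,t\})\ge \ell_i$, then $(s,t_i,s_i,t)$ is a temporal path. The degenerate cases $s=s_i$ or $t=t_i$ give direct edges. Hence the structure covers $S_i\times T_0$.

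\textbf{Step 2: choosing $T_0$ and the hubs.} I would fix a reference source $s^\circ$ and sort $T$ by $\lambda(\{s^\circ,\cdot\})$. I then try to pick $T_0$ among targets whose labels are large for many sources, and use pigeonhole over the $L$ label values to find the hubs. The intended scheme is as follows.
\begin{itemize}
\item For each level $\ell\in\{1,\dots,L\}$, let $S^{(\ell)}$ be the set of sources that have at least $r$ targets (other than a hub) with label $\ge \ell$.
\item Assign each source $s$ to a level $\ell(s)$: the largest $\ell$ such that at least $r+1$ of its targets have label $\ge\ell$.
\item Within a level, choose the hub target $t_\ell$ and the hub source $s_\ell$ as the latest-arriving source at $t_\ell$, exactly as in lines 3--5 of \cref{alg:zorro}.
\end{itemize}
Because $|T|=n+r> Lr$, pigeonhole over the $L$ levels should guarantee that every source has more than $r$ targets in some label class. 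This should force level assignments compatible with a common $T_0$.

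\textbf{Main obstacle.} The hard step is making $T_0$ \emph{common} to all groups while keeping the Step~1 inequalities. The danger is that different hub sources $s_i$ have their ``late'' targets in disjoint places. My first attempt is to choose $T_0$ as the $r$ targets maximizing $M(t):=\max_{s\in S}\lambda(\{s,t\})$, or, symmetrically, to choose hub targets minimizing $M$. Then each group is taken to be the sources whose latest edge into $T_0$ arrives no earlier than their edge into the chosen hub target $t_i$. The counting inequality $(L-1)r<n$ should be exactly what guarantees that such a hub exists for each of at most $L$ nonempty groups. If this fails, the fallback is to allow $T_0$ to be assembled greedily level by level, paying at most $r$ edges per level, which still fits within $2n+r$.

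\textbf{Running time and symmetric case.} The running time $\bigO(n^2)$ follows from scanning the labels once per source and per target. The symmetric statement, with $|S|=n+r$ and $|T|=n$, follows by reversing time, i.e.\ replacing $\lambda$ by $L+1-\lambda$, and swapping the roles of $S$ and $T$.
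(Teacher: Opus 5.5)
Your proposal does not prove the lemma. The step you flag as the main obstacle, a single $T_0$ that works for all groups while keeping the Step~1 inequalities, is the entire content of the lemma, and none of your suggestions achieves it. Step~1 needs each hub source $s_i$ to see \emph{every} $t\in T_0$ at label at least $\ell_i=\lambda(\{s_i,t_i\})$, and each member $s\in S_i$ to satisfy $\lambda(\{s,t_i\})\le\ell_i$.

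Choosing $T_0$ to maximize $M(t)=\max_s\lambda(\{s,t\})$ does not produce any single source that sees all of $T_0$ late. Your grouping rule (members whose \emph{latest} edge into $T_0$ is no earlier than their edge to $t_i$) constrains the members' own edges into $T_0$, which do not even occur on the path $(s,t_i,s_i,t)$. It never compares $\lambda(\{s,t_i\})$ with $\ell_i$. The level classes differ from source to source, so pigeonhole over them yields no common $r$-set. The greedy fallback does not explain how $(s,t)$ is covered when $t$ entered $T_0$ at a level other than that of $s$.

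In fact, a few Zorro structures, each paying its own $r$-edge star into $T_0$, cannot work in general. Take $L=2$ and $r=n-1$; then your budget $n+gr\le 2n+r$ allows only $g\le 2$. Choose the label-$2$ neighbourhoods as the ten unions of two out of five nearly equal parts of $T$, each used three times, with $n\ge 30$. Then every source has at most $n-2<r$ edges of label $2$, while every pair of targets is seen at label $2$ by at least three sources. No hub source sees all of $T_0$ at label $2$, so $\ell_i=1$ for every group containing a non-hub source. But some non-hub source has label $2$ to both hub targets, which violates Step~1.

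The missing idea is to use a \emph{single} hub source, so that the star into $T_0$ is paid only once. The hub target then varies with the source, at a cost of one extra edge per source rather than $r$ per group. Take $s^*$ maximizing $\sum_{t\in T}\lambda(\{s^*,t\})$, and let $T_0$ be its $r$ largest neighbours. For every $s\in S$ there is $t_s\in T\setminus T_0$ with $\lambda(\{s,t_s\})\le\lambda(\{s^*,t_s\})$. Otherwise each of the $n$ targets outside $T_0$ contributes at least $1$ to $\sum_{t}(\lambda(\{s,t\})-\lambda(\{s^*,t\}))$, and each of the $r$ targets in $T_0$ contributes at least $1-L$. The sum is then at least $n-(L-1)r>0$, contradicting the choice of $s^*$; this is exactly where $r<n/(L-1)$ is used. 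Then $F=\{\{s,t_s\},\{s^*,t_s\}: s\in S\}\cup\{\{s^*,t\}: t\in T_0\}$ has at most $2n+r$ edges. For $t\in T_0$ the path $(s,t_s,s^*,t)$ is temporal because $t_s\notin T_0$. Your time-reversal argument for the symmetric case matches the paper's. Your running-time claim has nothing to rest on until the construction is fixed.
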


\begin{proof}
Choose $s^*\in S$ maximizing
$
\sum_{t\in T}\lambda(\{s^*,t\})
$, and let $T_0$ consist of the $r$ largest neighbors of $s^*$.

We claim that, for every $s\in S$, there is some $t_s\in T\setminus T_0$ such that
$\lambda(\{s,t_s\})\leq\lambda(\{s^*,t_s\}).$
Indeed, otherwise
\[
\begin{aligned}
\sum_{t\in T}
\bigl(\lambda(\{s,t\})-\lambda(\{s^*,t\})\bigr)
&\geq \sum_{t\in T\setminus T_0}1 + \sum_{t\in T_0} \bigl(\lambda(\{s,t\})-\lambda(\{s^*,t\})\bigr)\\
&\geq |T\setminus T_0| + |T_0| (1-L)
=n-(L-1)r>0,
\end{aligned}
\]
contradicting the choice of $s^*$.

Let $F$ be the set of edges $\{s,t_s\}$ and $\{s^*,t_s\}$ for every $s\in S$,
and all edges between~$s^*$ and~$T_0$. Then $|F|\leq 2n+r$ and it only remains to argue that $F$ covers~$S \times T_0$.
For $s\in S$ and~$t\in T_0$, if $F$ does not already contain the direct edge $\{s,t\}$, observe that the path $(s,t_s,s^*,t)$ is temporal.
This is because $\lambda(\{s,t_s\})\leq\lambda(\{s^*,t_s\})$ by definition of~$t_s$ and~$\lambda(\{s^*,t_s\}) \leq \lambda(\{s^*,t\})$ as $t\in T_0$ and~$t_s\notin T_0$, that is, $t$ belongs to the $r$ largest neighbors of $s^*$, while $t_s$ does not.

Observe that $s^*$ can be computed in time $\bigO(n^2)$ and $T_0$ in time $\bigO(n\log n)$ (for sorting). For each $s\in S$, the target $t_s$ can be found in time~$\bigO(n)$, resulting in a total running time of~$\bigO(n^2)$ for computing $(t_s)_{s\in S}$. The set~$F$ can then be retrieved in time~$\bigO(n)$. Therefore, the total running time is~$\bigO(n^2)$.

The symmetric statement follows by interchanging $S$ and $T$,
replacing each label $i$ by~$L+1-i$, and reversing the resulting
paths.
\end{proof}

The next result builds on the previous lemma and handles the case of a larger difference between the sizes of $S$ and $T$.

\begin{lemma}\label{lem:balancing}
Let $r\in\mathbb{N}$ and $G=(S,T,\lambda)$ be a temporal biclique of lifetime at most $L$
with
\[
|S|=n,\qquad |T|=n+r.
\]
Then there are a set $T_0\subseteq T$ of size $r$ and a set
$F\subseteq E(G)$, computable in time $\bigO(rn^2)$, such that~$F$ covers $S \times T_0$ and
\[
|F|\leq 3n+(4L+1)r.
\]
The symmetric statement holds when $|S|=n+r$ and $|T|=n$.
\end{lemma}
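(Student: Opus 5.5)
Our plan is to apply \cref{lem:one-balancing-step} repeatedly, peeling off chunks of $T$ one after another, while the sides stay comparable; when the imbalance is large, we handle it differently.

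\textbf{Large imbalance.} If $r \ge n$, we avoid the balancing lemma and cover $S\times T$ directly. For any fixed target $t^\circ$, take all edges $\{s,t^\circ\}$ for $s\in S$. Let $s^\circ\in S$ maximize $\lambda(\{s,t^\circ\})$, and add all edges $\{s^\circ,t\}$ with $\lambda(\{s^\circ,t\})\ge\lambda(\{s^\circ,t^\circ\})$. This does not cover all targets, so a cleaner route is needed.

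\textbf{Iterative peeling.} Set $T^{(0)}=T$ and $r_0=r$. In step $j$ we have $|T^{(j)}|=n+r_j$. We apply \cref{lem:one-balancing-step} with $r'_j=\min\{r_j,\lceil n/L\rceil -1\}$ (or any value below $n/(L-1)$) to the biclique $G[S,T^{(j)}]$, after deleting $r_j-r'_j$ arbitrary targets so that the hypothesis $|T|=n+r'_j$ holds. This produces a set $T_0^{(j)}$ of size $r'_j$, covered by $F_j$ with at most $2n+r'_j$ edges. We then remove $T_0^{(j)}$ from $T^{(j)}$; the removed chunks are disjoint and their union is the desired $T_0$. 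The number of rounds is about $rL/n+1$, so the total size is
\[
\sum_j (2n+r'_j)\le 2n\left(\tfrac{rL}{n}+1\right)+r = 2n+(2L+1)r,
\]
up to rounding; this is where the slack $3n+(4L+1)r$ absorbs the ceilings. The running time is $\bigO(n^2)$ per round, and there are at most $r$ rounds, giving $\bigO(rn^2)$.

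\textbf{Main obstacle.} The difficulty is the bookkeeping when $n/(L-1)<1$ or when rounding makes chunks smaller, for instance chunks of size $1$ when $n<L-1$. When $n<L$, every chunk of size $1$ costs $2n+1\le 2L+1$ edges, i.e.\ at most $(2L+1)r$ in total. Otherwise, chunk size $\lfloor (n-1)/(L-1)\rfloor\ge n/(2L)$ gives at most $2rL/n+1$ rounds, hence a cost of at most $4Lr+2n+r$. Both cases fit within $3n+(4L+1)r$. The symmetric statement follows as in \cref{lem:one-balancing-step}, by reversing the labels.
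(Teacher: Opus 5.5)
Your argument for $n\ge L$ is essentially the paper's proof. You use chunks of size $q=\lfloor (n-1)/(L-1)\rfloor\ge n/(2L)$, each handled by \cref{lem:one-balancing-step}; temporarily discarding surplus targets is the same as the paper's adding one batch at a time to a core of $n$ targets. This gives at most $2rL/n+1$ rounds and $|F|\le 2n+(4L+1)r$.

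The gap is the case $n<L$. There you say each chunk of size $1$ costs $2n+1$ edges. That is the bound of \cref{lem:one-balancing-step} with $r=1$, but the lemma requires $r<n/(L-1)$. For $n\le L-1$ we have $n/(L-1)\le 1$, so no positive chunk size is admissible. You flag $n/(L-1)<1$ as the obstacle yourself, and then proceed as if it were not there.

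This is not a mere technicality. The averaging argument in that lemma only gives a contradiction when $n-(L-1)r>0$, and its construction can genuinely fail. Take $n=2$, $L=4$, $T=\{t_1,t_2,t_3\}$, with $\lambda(\{s_1,t_i\})=1,1,4$ and $\lambda(\{s_2,t_i\})=2,2,1$ for $i=1,2,3$. Then $s^*=s_1$ and $T_0=\{t_3\}$, but there is no $t_{s_2}\in\{t_1,t_2\}$ with $\lambda(\{s_2,t_{s_2}\})\le\lambda(\{s_1,t_{s_2}\})$.

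The repair is the one the paper uses. When $n<L$, let $T_0$ be any $r$ targets and let $F$ be all edges between $S$ and $T_0$. Every pair is then covered by a direct edge, and $|F|\le nr\le Lr$.

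With this fix, your proof is complete once you also remove two leftovers. First, drop the abandoned ``large imbalance'' paragraph. Second, drop the initial choice $r'_j=\min\{r_j,\lceil n/L\rceil-1\}$: it equals $0$ whenever $n\le L$, so it makes no progress, and its ``up to rounding'' estimate is superseded by your $q$-based count.
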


\begin{proof}
If $r<n/(L-1)$, apply \cref{lem:one-balancing-step} directly.
So assume from now on that $r\geq n/(L-1)$.
If $n<L$, choose any $r$ vertices of $T$ for
$T_0$ and let $F$ consist of all edges between $S$ and $T_0$. 
This trivially covers $S \times T_0$ and $|F|\leq nr\leq Lr$, so the statement holds.

It remains to consider $n\geq L$. Set
\[
q:=\left\lfloor\frac{n-1}{L-1}\right\rfloor\geq1.
\]
Choose an arbitrary $r$-element set $T'\subseteq T$ and partition it into at most $\lceil r/q\rceil$ batches, each of size at most $q$, denoted $B_1,\dots, B_{\lceil r/q\rceil}$.
Then, apply the following procedure: begin with~$T^*=T \setminus T'$, i.e., $|T^*|=n$. In the $i$-th step, first add $B_i$ to $T^*$, obtaining a target set of size $n+q$, apply \cref{lem:one-balancing-step}, add the computed edge set to $F$ and remove the computed set~$T_0$ from $T^*$ to again obtain a target set of size $n$.
Note that the application of \cref{lem:one-balancing-step} is valid as $q\leq (n-1)/(L-1)<n/(L-1)$.
We finally set $T_0$ to be the set of removed targets.
By \cref{lem:one-balancing-step}, the resulting set $F$ covers $S \times T_0$ and we have
\begin{equation*}
|F|\leq \left\lceil \frac{r}{q} \right\rceil \cdot (2n+q)
\leq \left(\frac{r}{q}+1\right) (2n+q)
= 2n+q+r+\frac{2nr}{q}
\overset{(*)}{\leq}
3n + (4L+1)r,
\end{equation*}
where the last inequality $(*)$ follows from $q\leq n$ and we claim that $n/q\leq 2L$.
Indeed, we have
$
2q\geq q+1>\frac{n-1}{L-1}.
$
Since all quantities in this inequality are integers, this implies $n\leq 2q(L-1)\leq 2qL$, which completes the size bound of $F$.

For the running time claim, observe that each of the $\lceil r/q \rceil\leq r$ applications of  \cref{lem:one-balancing-step} requires time $\bigO(n^2)$, resulting in a total running time of $\bigO(rn^2)$.

\end{proof}

Now, we have all the prerequisites at hand to recursively construct sparse 3-bispanners of bicliques of bounded lifetime.

\begin{theorem}\label{thm:bounded-lifetime-biclique}
Every balanced temporal biclique of size $n$ and lifetime at most $L$
admits a temporal $3$-bispanner of size at most $12Ln$, computable in time $\bigO(Ln^3)$.
\end{theorem}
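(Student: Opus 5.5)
We prove the statement by strong induction on $n$, showing that the recursive procedure below outputs a $3$-bispanner of size at most $12Ln$. For small $n$ (say $n\le 12L$) we return the whole edge set, whose size $n^2\le 12Ln$ is within the bound.

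For the inductive step we apply \cref{lem:large-lifetime-rectangle} to $G$ and obtain $S',T',Z$. We write $a:=|S'|$ and $b:=|T'|$, so that $a+b\ge n+n/L$ and $|Z|\le a+b$. The pairs not yet covered are exactly those in $(S\setminus S')\times T$ and $S'\times(T\setminus T')$. It is convenient to split these into the two disjoint rectangles $S'\times(T\setminus T')$ and $(S\setminus S')\times T$.
\begin{itemize}
\item The first rectangle has side sizes $a$ and $n-b$, and $a-(n-b)=a+b-n\ge n/L$, so the source side is the larger one.
\item The second rectangle has side sizes $n-a$ and $n$, so the target side is larger, by $a$.
\end{itemize}
Both rectangles are unbalanced, and we balance each with the symmetric or direct version of \cref{lem:balancing}.

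For the first rectangle, set $m_1:=n-b$ and $r_1:=a+b-n$. Removing $r_1$ sources costs at most $3m_1+(4L+1)r_1$ edges and leaves a balanced biclique of size $m_1$, on which we recurse. For the second rectangle, set $m_2:=n-a$ and $r_2:=a$. Removing $a$ targets costs at most $3m_2+(4L+1)a$ edges and leaves a balanced biclique of size $m_2$, on which we also recurse. Correctness follows as in \cref{thm:bispan}: every pair is covered either by $Z$, by one of the balancing edge sets $F$, or by a recursive call.

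For the size, write $x:=a+b-n\ge n/L$ and note $m_1+m_2=2n-a-b=n-x$. The size recurrence is
\[
K(n)\le (a+b)+3(m_1+m_2)+(4L+1)(r_1+r_2)+K(m_1)+K(m_2).
\]
Plugging in the induction hypothesis $K(m)\le 12Lm$ gives recursive terms $12L(n-x)$, so the saving $12Lx\ge 12n$ has to pay for all the overhead. The overhead is roughly $n+x+3(n-x)+(4L+1)(x+a)$, and here lies the difficulty: the term $(4L+1)a$ can be as large as $\Theta(Ln)$, which is far too large.

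The main obstacle is therefore to avoid paying $L$ per balancing vertex on a side as large as $a$. We would first try to handle the complement differently: split it as $(S\setminus S')\times (T\setminus T')$ plus the two strips $S'\times(T\setminus T')$ and $(S\setminus S')\times T'$. These have sizes $(n-a)\times(n-b)$, $a\times(n-b)$ and $(n-a)\times b$. Each strip can be handled by a symmetric argument, since its imbalance relative to the balanced size is controlled by $x$. Alternatively, we would choose $S'$, $T'$ more carefully so that the imbalances are only $O(x)$ after truncating $S'$ to size $\min(a,\cdot)$.

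The goal is an inequality of the form: total imbalance is $O(x)$, every non-recursive cost is at most $c_1 n+c_2 L x$, and the sizes of the recursive subproblems sum to at most $n-x$. With $x\ge n/L$, this makes $c_1n+c_2Lx-12Lx\le 0$ for suitable constants, closing the induction. Checking that the constants fit under $12$ is the delicate accounting step.

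The running time is $O(n^3)$ per level from \cref{lem:large-lifetime-rectangle}, plus $O(rn^2)$ from balancing. Since the instance sizes shrink geometrically by a factor $1-1/L$ along the recursion, the total is $O(Ln^3)$.
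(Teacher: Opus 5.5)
\textbf{There is a genuine gap: the induction is never closed.} You correctly see that balancing $(S\setminus S')\times T$ costs $(4L+1)a=\Theta(Ln)$, which is too much. But both suggested fixes are left unspecified, and the first does not work as described.

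If you recurse separately on the middle block $(S\setminus S')\times(T\setminus T')$ and on the two balanced strips, the recursive sizes add up to roughly $(n-a)+(n-b)+\min(n-a,n-b)$. For example, $a=b=(n+x)/2$ gives $\tfrac32(n-x)$, which exceeds $n$ whenever $x<n/3$, leaving no saving at all to pay for the overhead. The middle block is also unbalanced by $|a-b|$, which is not controlled by $x$. So you correctly state the target inequality (recursive sizes summing to at most $n-x$), but you never achieve it. The ``delicate accounting step'' you defer is exactly the missing proof.

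\textbf{The missing idea.} Stop partitioning the uncovered pairs into disjoint rectangles, and let the recursion re-cover pairs that $Z$ already handles. Balance only the two strips, each of which has imbalance exactly $x=a+b-n$:
\begin{itemize}
\item apply the symmetric version of \cref{lem:balancing} to $G[S',T\setminus T']$, obtaining $S_0\subseteq S'$ with $|S_0|=x$ and $F_S$ covering $S_0\times(T\setminus T')$;
\item apply \cref{lem:balancing} to $G[S\setminus S',T']$, obtaining $T_0\subseteq T'$ with $|T_0|=x$ and $F_T$ covering $(S\setminus S')\times T_0$.
\end{itemize}
Since $S_0\subseteq S'$, every pair in $S_0\times T$ is covered by $Z$ (if $t\in T'$) or by $F_S$. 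Since $T_0\subseteq T'$, every pair in $S\times T_0$ is covered by $Z$ (if $s\in S'$) or by $F_T$.

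Hence a single recursive call on the balanced biclique $G[S\setminus S_0,T\setminus T_0]$, of size $n-x$, handles all remaining pairs. The overhead is $|Z|+|F_S|+|F_T|\le (n+x)+3(n-b)+(4L+1)x+3(n-a)+(4L+1)x=4n+8Lx$. Using $x\ge n/L$, this is at most $12Lx$, which exactly matches the saving $12Ln-12L(n-x)$. With one recursive call of size at most $(1-1/L)n$, your geometric running-time argument then applies as written.
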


\begin{proof}
We proceed by induction on $n$. The claim is clear for $n=1$.

For $n\geq 2$, apply \cref{lem:large-lifetime-rectangle} and let $S'$, $T'$, and $Z$ be the obtained sets. Set $r:=|S'|+|T'|-n\geq n/L$. Then $Z$ covers $S' \times T'$ and
\begin{equation}\label{eq:size_of_Z}
|Z|\leq |S'|+|T'|=n+r.
\end{equation}
Since
$
    |S'|=r+n-|T'|=|T\setminus T'|+r,
$
\cref{lem:balancing}, applied to~$G[S',T\setminus T']$, gives a set~$S_0\subseteq S'$ of size $r$
and a set $F_S$ that covers $S_0 \times T\setminus T'$, where
\begin{equation}\label{eq:size_of_FS}
    |F_S|
    \leq 3(n-|T'|)+(4L+1)r.
\end{equation}
Similarly, since
$
    |T'|=|S\setminus S'|+r,
$
\cref{lem:balancing} applied to $G[S\setminus S',T']$, gives a set
$T_0\subseteq T'$ of size $r$ and a set $F_T$ that covers $S\setminus S' \times T_0$, where
\begin{equation}\label{eq:size_of_FT}
    |F_T|
    \leq 3(n-|S'|)+(4L+1)r.
\end{equation}

We claim that every pair $s,t$ with $s\in S_0$ or $t\in T_0$ is covered by $Z\cup F_S\cup F_T$.
Indeed, if~$s\in S_0$, then the pair is covered by $Z$ when its target belongs to $T'$, and by $F_S$ otherwise.
Similarly, if $t\in T_0$, then the pair is covered by $Z$ when its source belongs to $S'$, and by $F_T$ otherwise.

It therefore remains only to recursively span
$
    G[S\setminus S_0,T\setminus T_0],
$
which is balanced of size~$n-r\leq n-n/L$. 
Before the recursive call, the number of edges added to the spanner is at most
\begin{align*}
|Z|+|F_S|+|F_T|
\overset{\eqref{eq:size_of_Z}\eqref{eq:size_of_FS}\eqref{eq:size_of_FT}}&{\leq} n+r
   +3(n-|T'|)+(4L+1)r
   +3(n-|S'|)+(4L+1)r\\
&=7n+(8L+3)r-3(|T'|+|S'|)
=4n+8Lr
\leq12Lr,
\end{align*}
where the last equality uses $|S'|+|T'|=n+r$, and the last inequality uses
$r\geq n/L$.

By the induction hypothesis, the total number of edges is therefore
at most
\[
    12Lr+12L(n-r)=12Ln.
\]
By \cref{lem:large-lifetime-rectangle,lem:balancing} and since $r\leq n$, the sets $Z,F_S,F_T$ can be computed in time~$\bigO(n^3)$.
Hence, the running time~$R(n)$ of the procedure satisfies the recurrence $R(n)\leq \bigO(n^3)+R((1-1/L)n)$.
Summing the resulting geometric series yields $R(n)=\bigO(Ln^3)$.
\end{proof}

Together with \cref{obs:biclique}, this completes the proof of \cref{thm:lifetime}.
 
\bibliographystyle{alphaurl}
\bibliography{bib}

\end{document}